\newcommand{\commentout}[1]{}
\newcommand{\alert}[1]{\textbf{\color{red}
		[[[#1]]]}\marginpar{\textbf{\color{red}**}}\typeout{ALERT:
		\the\inputlineno: #1}}
\def\MathF{\hbox{\rm I\kern-2pt F}}
\def\MathP{\hbox{\rm I\kern-2pt P}}
\def\MathR{\hbox{\rm I\kern-2pt R}}
\def\MathZ{\hbox{\sf Z\kern-4pt Z}}
\def\MathN{\hbox{\rm I\kern-2pt I\kern-3.1pt N}}
\def\MathC{\hbox{\rm \kern0.7pt\raise0.8pt\hbox{\footnotesize I}
		\kern-4.2pt C}}
\def\MathQ{\hbox{\rm I\kern-6pt Q}}
\newcommand{\poly}{{\rm poly}}
\newcommand{\mommit}[1]{}
\newcommand{\namedref}[2]{\hyperref[#2]{#1~\ref*{#2}}}
\newcommand{\sectionref}[1]{\namedref{Section}{#1}}
\newcommand{\appendixref}[1]{\namedref{Appendix}{#1}}
\newcommand{\theoremref}[1]{\namedref{Theorem}{#1}}
\newcommand{\algref}[1]{\namedref{Algorithm}{#1}}
\newcommand{\claimref}[1]{\namedref{Claim}{#1}}
\newcommand{\tableref}[1]{\namedref{Table}{#1}}
\newtheorem{theorem}{Theorem}
\newtheorem{lemma}{Lemma}
\newtheorem{claim}[lemma]{Claim}
\def\tO{\tilde{O}}
\def\eps{\epsilon}
\newcommand{\tr}{{\rm trun}}
\begin{document}

\title{Centralized, Parallel, and Distributed Multi-Source Shortest Paths via Hopsets and Rectangular Matrix Multiplication}

\author[1]{Michael Elkin}
\author[1]{Ofer Neiman}

\affil[1]{Department of Computer Science, Ben-Gurion University of the Negev,
Beer-Sheva, Israel. Email: \texttt{\{elkinm,neimano\}@cs.bgu.ac.il}}

\date{}
\maketitle

\begin{abstract}
Consider an undirected weighted graph $G = (V,E,w)$. We study the problem of computing $(1+\eps)$-approximate shortest
paths for $S \times V$, for a subset $S \subseteq V$ of $|S| = n^r$ sources, for some $0 < r \le 1$.
We devise a significantly improved algorithm for this problem in the entire range of parameter $r$, in both the classical centralized and the parallel (PRAM) models of computation, and in a wide range of $r$ in the distributed (Congested Clique) model. Specifically, our centralized algorithm for this problem requires time
$\tilde{O}(|E| \cdot n^{o(1)} + n^{\omega(r)})$, where $n^{\omega(r)}$ is the time required to multiply an $n^r \times n$ matrix by an $n \times n$ one. Our PRAM algorithm has polylogarithmic time $(\log n)^{O(1/\rho)}$, and its work complexity is
$\tilde{O}(|E| \cdot n^\rho + n^{\omega(r)})$, for any arbitrarily small constant $\rho >0$. 

In particular, for $r \le 0.313\ldots$, our centralized algorithm computes $S \times V$ $(1+\eps)$-approximate shortest paths in $n^{2 + o(1)}$ time. Our PRAM polylogarithmic-time algorithm has work complexity $O(|E| \cdot n^\rho + n^{2+o(1)})$, for any arbitrarily small constant $\rho >0$.
Previously existing solutions either require centralized time/parallel work of $O(|E| \cdot |S|)$ or provide  much weaker approximation guarantees.

In the Congested Clique model, our algorithm solves the problem in polylogarithmic time for $|S| = n^r$ sources, for $r \le 0.655$, while previous state-of-the-art algorithms did so only for $r \le 1/2$. Moreover, it improves previous bounds for all $r > 1/2$. For unweighted graphs, the running time is improved further to $\poly(\log\log n)$.



\end{abstract}

	\section{Introduction}
	
	We consider the problem of computing {\em $(1+\eps)$-approximate shortest paths} (henceforth, $(1+\eps)$-ASP) in undirected weighted graphs $G = (V,E,w)$, $|V| = n$, for an arbitrarily small $\eps > 0$. We study this problem in the centralized, parallel (PRAM) and distributed (Congested Clique) models of computation.
	Our focus is on computing $(1+\eps)$-ASP for $S \times V$, for a set $S \subseteq V$ of {\em sources}, $|S| = n^r$, for a constant parameter $0 < r \le 1$.
	
	This is one of the most central, fundamental and intensively studied problems in Graph Algorithms. Most of the previous research concentrated on one of the two following scenarios: the {\em single-source} ASP (henceforth, approximate SSSP), i.e., the case $|S| = 1$, and the
	{\em all-pairs} ASP (henceforth, APASP), i.e., the case $S = V$.
	
	We next overview most relevant previous results and our contribution in the centralized model of computation, and then turn to the PRAM and distributed models.
	
	\subsection{Centralized Model}
	
	The classical algorithm of Dijkstra solves {\em exact} SSSP problem in time $O(|E| + n \log n)$ \cite{FT87}. Thorup \cite{T04} refined this bound to $O(|E| + n \log\log n)$ when weights are integers.
	Employing these algorithms for ASP problem for $S \times V$ results in running time of $O(|S|(|E| + n \log\log n))$.
	In the opposite end of the spectrum, Galil and Margalit \cite{GM97}, Alon et al. \cite{AGM97} and Zwick \cite{Z02} showed that one can use fast matrix multiplication (henceforth, FMM) to solve $(1+\eps)$-APASP in time $\tilde{O}(n^\omega)$, where $\omega$ is the matrix multiplication exponent. ($n^\omega$ is the time required to multiply two $n \times n$ matrices. The currently best-known estimate on $\omega$ is
	$\omega < 2.372\dots$ \cite{W12,G14,CW90}.)
	
	By allowing larger approximation factors, one can achieve a running time of $\tilde{O}(n^2)$ for APASP. Specifically, Cohen and Zwick \cite{CZ01} devised an algorithm for 3-APASP with this running time, and Baswana and Kavitha \cite{BK06} refined the approximation ratio to $(2,w)$. The notation $(2,w)$ means that for a vertex pair $(u,v)$, their algorithm provides an estimate with a multiplicative error of 2, and an additive error bounded by the maximal  weight of an edge on some shortest $u-v$ path in the graph.
	
	Cohen \cite{C00}, Elkin \cite{E01}, and Gitlitz and the current authors \cite{EGN19} also showed that one can obtain a $(1+\eps,\beta \cdot w)$-approximation for the ASP problem for $S\times V$ in time $O(|E|\cdot n^\rho + |S| \cdot n^{1+1/\kappa})$, where $\beta = \beta(\eps,\kappa,\rho)$ is a quite large constant (as long as $\eps > 0,\rho > 0, 1/\kappa > 0$ are constant), and $w$ is as in the result of Baswana and Kavitha \cite{BK06}.
	
	However, if one insists on a purely multiplicative error of at most $1+\eps$, for an arbitrarily small constant $\eps > 0$, then for dense graphs ($|E| = \Theta(n^2)$), the best-known running time for ASP for $S \times V$ is $\tilde{O}(\min\{|S|\cdot n^2,n^\omega\})$.
	In the current paper we devise an algorithm that solves the problem  in $\tilde{O}(n^{\omega(r)} + |E|\cdot n^{o(1)})$ time,\footnote{In fact, our result holds for arbitrary $0<\eps<1$, see \theoremref{thm:imp-sources}.} where $\omega(r)$ is the matrix multiplication exponent of {\em rectangular matrix multiplication}. That is, $n^{\omega(r)}$ is the time required to multiply an $n^r \times n$ matrix by an $n \times n$ matrix. Coppersmith \cite{Cop97} showed that for $r \le 0.291$, $\omega(r) \le 2+o(1)$, and Le Gall and Urrutia \cite{GU18} improved this bound further to $r \le 0.313$. Denote $\alpha \ge 0.313$ as the maximal value such that $\omega(\alpha)\le 2+o(1)$. Therefore, our algorithm solves $(1+\eps)$-ASP problem for $S \times V$ in $n^{2+o(1)}$ time, as long as $|S| = O(n^\alpha)$. Moreover, the bound on our running time grows gracefully from $n^{2+o(1)}$ to $n^\omega$, as the number of sources $|S|$ increases from $n^\alpha$ to $n$. When $S = V$, our bound matches the bound of Zwick \cite{Z02}. See \tableref{table:asp1}.
	
	Furthermore, Dor et al. \cite{DHZ00} showed that any $(2-\eps)$-ASP algorithm for $S \times V$ that runs in $T(n)$ time,
	for any positive constant $\eps > 0$ and any  function $T(n)$, translates into an algorithm with running time $T(O(n))$ that multiplies two Boolean matrices with dimensions $|S| \times n$ and $n \times n$. Thus, the running time of our algorithm cannot be improved by
	more than a factor of $n^{o(1)}$ without improving the best-known algorithm for multiplying (rectangular) Boolean matrices.
	
	In terms of edge weights, the situation with our algorithm is similar to that with the algorithm of   Zwick \cite{Z02}. Both algorithms apply directly to graphs with polynomially-bounded edge weights. Nevertheless, both of them can be used in conjunction with the Klein-Sairam's reduction of weights \cite{KS97} to provide the same bounds for graphs with arbitrary weights.
	
	\begin{table*}
		\centering
		\small
		\begin{tabular}{ |l|l|l| }
			\hline
			\# of sources & Our running time  & Previous running time \\
			\hline
			$n^{0.1}$ & $n^{2+o(1)}$ & $n^{2.1}$ \\
			\hline
			$n^{0.2}$ & $n^{2+o(1)}$ & $n^{2.2}$ \\
			\hline
			$n^{0.3}$ & $n^{2+o(1)}$ & $n^{2.3}$ \\
			\hline	
			$n^{0.4}$ & $n^{2.011}$ & $n^{2.373}$ \\
			\hline
			$n^{0.5}$ & $n^{2.045}$ & $n^{2.373}$ \\
			\hline
			$n^{0.6}$ & $n^{2.094}$ & $n^{2.373}$ \\
			\hline
			$n^{0.7}$ & $n^{2.154}$ & $n^{2.373}$ \\
			\hline
			$n^{0.8}$ & $n^{2.222}$ & $n^{2.373}$ \\
			\hline
			$n^{0.9}$ & $n^{2.296}$ & $n^{2.373}$ \\
			\hline
			$n^1$ & $n^{2.373}$ & $n^{2.373}$ \\
			\hline
			
		\end{tabular}
		\caption{Results on $(1+\eps)$-ASP for $S\times V$ in the centralized model for weighted graphs (previous running time is for dense graphs).
		}\label{table:asp1}
	\end{table*}
	
	\subsection{Parallel Model}
	
	The situation in the parallel setting (PRAM) is similar to that in the centralized setting. The first parallel $(1+\eps)$-SSSP algorithm with polylogarithmic time
	(specifically, $(\log n)^{\tilde{O}((\log 1/\rho)/\rho)}$ and $O(|E|\cdot n^\rho)$ work, for any arbitrarily small constant parameter $\rho > 0$, was devised by Cohen \cite{C00}. Her bounds were improved in the last five years by \cite{EN16,EN19hop,L20,ASZ20}, culminating in polylogarithmic time and $\tilde{O}(|E|)$ work \cite{L20,ASZ20}.
	All these aforementioned algorithms are randomized.
	
	On the opposite end of the spectrum, algorithms of Galil and Margalit \cite{GM97}, Alon et al. \cite{AGM97}, and Zwick \cite{Z02} (based on FMM) can be used in the PRAM setting. They give rise to deterministic polylogarithmic time $\tilde{O}(n^\omega)$ work \cite{Z02} for the $(1+\eps)$-APSP problem.
	
	By using sparse spanners, the algorithm of Cohen \cite{C00} in conjunction with that of Baswana and Sen \cite{BS03} provides polylogarithmic time and $O(|E|\cdot n^{1/\kappa} + |S| \cdot n^{1+1/\kappa})$ work for $(2+\eps)\kappa$-ASP for $S \times V$, where $\kappa =  1,2,\dots$ is a parameter. Recently, Gitlitz and the current authors \cite{EGN19} also showed that one can have $(1+\eps,\beta \cdot w)$-ASP for $S \times V$ in polylogarithmic time and $O(|E|\cdot n^\rho + |S|\cdot n^{1+1/\kappa})$ work, where $\beta = \beta(\eps,\kappa,\rho)$ is a large constant (as long as $\eps,\rho,1/\kappa>0$ are constant), and $w$ is as above.
	
	Nevertheless, if one insists on a purely multiplicative error of at most $1+ \eps$, currently best-known solutions for the ASP problem for $S \times V$ that run in polylogarithmic time require work at least
	$\Omega(\min\{|S| \cdot |E|, n^\omega\})$. Our parallel algorithm for the problem with $|S| = n^r$ sources, $0 < r \le 1$, has polylogarithmic time $(\log n)^{O(1/\rho)}$ and work
	$\tilde{O}(n^{\omega(r)} + |E| \cdot n^\rho)$, for any arbitrarily small constant $\rho > 0$. Similarly to the centralized setting, this results in work $n^{2+o(1)} +\tilde{O}(|E|\cdot n^\rho)$, for any arbitrarily small constant $\rho > 0$,  as long as $|S| = O(n^\alpha)$, $\alpha = 0.313$, and it improves Zwick's bound \cite{Z02} of $n^\omega$ (which applies for $(1+\eps)$-APASP) for all values of $r < 1$.
	The aforementioned reduction of \cite{DHZ00} implies that the work complexity of our algorithm cannot be improved by more than a factor of $n^{o(1)}$ without improving he best-known centralized algorithm for multiplying (rectangular) Boolean matrices.
	
	Our algorithm uses FMM and hopsets. The ingredient that builds hopsets is randomized, but by using a new deterministic construction of hopsets from \cite{EM20}, one can make it deterministic, with essentially the same bounds. As a result our ultimate $(1+\eps)$-ASP algorithms (both centralized and parallel ones) become deterministic.

	\subsection{Distributed Model}
	
	In the Congested Clique model, every two vertices of a given $n$-vertex graph $G=(V,E)$, may communicate in each round by a message of $O(\log n)$ bits. The running time of an algorithm is measured by the number of rounds. Computing shortest paths in this model has been extensively studied in the last decade.  An exact APSP algorithm was devised in \cite{CKK15} with running time $O(n^{1 - 2/\omega}) = O(n^{0.158\ldots})$ for unweighted undirected graphs (or with $1+\epsilon$ error in weighted directed graphs), and in $\tilde{O}(n^{1/3})$ time for weighted directed graphs. The latter result was improved in \cite{G16} to $n^{0.209}$ when the weights are constant.
	
	The first algorithm with polylogarithmic time for weighted undirected graphs was devised by \cite{BKKL17}, who showed an ($1+\epsilon$)-approximate {\em single-source} shortest paths algorithm. In \cite{CDKL19}, among other results, a ($1+\epsilon$)-ASP algorithm with polylogarithmic time was shown for a set of $\tilde{O}(n^{1/2})$ sources. For unweighted graphs, the running time was recently improved by \cite{DP20} to $\poly(\log\log n)$, with a similar restriction of $O(n^{1/2})$ sources.
	
	In the current paper we obtain an algorithm for the $(1+\eps)$-ASP in the Congested Clique model for weighted undirected graphs with polylogarithmic time, for a set of
	$|S| = O(n^{{1+ \alpha} \over2}) = O(n^{0.655\ldots})$ sources.
	For larger sets of sources, our  running time gracefully increases until it reaches $\tO(n^{0.158})$ time when $S=V$ (see \tableref{table:asp-CC}).
	Denoting $|S| = n^r$, our algorithm outperforms the state-of-the-art bound of \cite{CDKL19} for all $0.5 < r < 1$.
	In the case of unweighted graphs, we provide a similar improvement over the result of \cite{DP20}: our ($1+\epsilon$)-ASP algorithm has $\poly(\log\log n)$ time, allowing up to $n^{0.655}$ sources.
	
	\begin{table*}[ht]
		\centering
		\begin{tabular}{ |l|l|l|l| }
			\hline
			\# of sources & Our running time  & Running time of \cite{CDKL19} & Running Time of \cite{CKK15} \\
			\hline
			$n^{0.5}$ & $\tilde{O}(1)$ & $\tilde{O}(1)$ & $n^{0.158}$ \\
			\hline
			$n^{0.6}$ & $\tilde{O}(1)$ & $n^{0.06}$ & $n^{0.158}$\\
			\hline
			$n^{0.7}$ & $n^{0.006}$ & $n^{0.13}$ & $n^{0.158}$\\
			\hline
			$n^{0.8}$ & $n^{0.04}$ & $n^{0.2}$ & $n^{0.158}$\\
			\hline
			$n^{0.9}$ & $n^{0.1}$ & $n^{0.26}$ & $n^{0.158}$\\
			\hline
			$n^1$ & $n^{0.158}$ & $n^{1/3}$ & $n^{0.158}$\\
			\hline

		\end{tabular}
		\caption{Results on $(1+\eps)$-ASP for $S\times V$ in the Congested Clique model (for any constant $\eps>0$, and hiding constants and lower order terms).
		}\label{table:asp-CC}
	\end{table*}
	
	\subsection{Additional Results}
	
	We also devise an algorithm for the $(1+\eps)$-approximate {\em $k$-nearest neighbors} (henceforth, $k$-NN) problem in PRAM. Here $k$, $1 \le k \le n$, is a parameter. For a vertex $v$, let $z_1,z_2,\ldots$ be all other vertices ordered by their distance from $v$ in non-decreasing order, with ties broken arbitrarily.
	A vertex $u$ is {\em in $(1+\eps)$-approximate $k$-NN of $v$} if it is no farther from $v$ than $(1+\eps) d_G(v,z_k)$.
	The objective is to compute $(1+\eps)$-approximate shortest paths for some set ${\cal P}$ of pairs of vertices,
	that for every vertex $u \in V$ contains at least $k$ pairs $(u,v)$ with $v$ being in the $(1+\eps)$-approximate $k$-NN of $v$.
	Our algorithm for this problem applies even in {\em directed} weighted graphs. It requires polylogarithmic time and $\tilde{O}(\min\{n^\omega,k^{0.702} n^{1.882} + n^{2+o(1)}\})$ work. For $k = O(n^{0.168})$, this work is $n^{2+0(1)}$, and for $k = o(n^{0.698})$, this bound is better than $n^\omega$, i.e., it improves the bound for $(1+\eps)$-APASP problem.

	
	\subsection{Technical Overview}\label{sec:tech}
	
	As was mentioned above, our algorithms employ hopsets.
	A graph $H= (V,E',w')$ is a {\em $(1+\eps,\beta)$-hopset} for a graph $G = (V,E,w)$, if for every vertex pair $u,v \in V$, we have
	\begin{equation}
		\label{eq:hopsetdef}
		d_G(u,v) \le d_{G \cup H}^{(\beta)}(u,v) \le (1 + \eps) d_G(u,v)~.
	\end{equation}
	Here $d_{G \cup H}^{(\beta)}(u,v) $ stands for {\em $\beta$-bounded distance} between $u$ and $v$ in $G \cup H$, i.e., the length of the shortest $u-v$ path between them with at most $\beta$ edges (henceforth, {\em $\beta$-bounded path}).
	
	Our algorithm is related to the algorithm of \cite{CDKL19},
	designed for $(1+\eps)$-ASP for $S \times V$ in the distributed Congested Clique (henceforth, CC) model.
	Their algorithm starts with computing a $(1 +\eps,\beta)$-hopset $H$ for the input graph $G$. It then adds $H$ to $G$, and creates an adjacency matrix $A$ of $G \cup H$. It then creates a matrix $B$ of dimensions $|S| \times n$, whose entries $B_{u,v}$, for $(u,v) \in S \times V$, are defined as $w(u,v)$ if $(u,v) \in E$, and $\infty$ otherwise.
	Then compute distance products $B \star A, (B \star A) \star A,\ldots,(B \star A^{\beta-1}) \star A = B \star A^\beta$.
	By equation (\ref{eq:hopsetdef}), $B \star A^\beta$ is a $(1+\eps)$-approximation of all distances in $S \times V$.
	
	Censor-Hillel et al. \cite{CDKL19} developed an algorithm for efficiently multiplying {\em sparse} matrices in the distributed CC model. They view the matrices $B, B \star A,\ldots,B \star A^{\beta-1}$, as sparse square $n \times n$ matrices, and as a result compute $B \star A^{\beta}$ efficiently via their (tailored to the CC model) algorithm.
	In particular, their algorithm does not use Strassen-like fast matrix multiplication (FMM) techniques, but rather focuses on carefully partitioning all the products that need to be computed in a naive matrix product of dimensions $|S|\times n$ by $n \times n$ among $n$ available processors.
	
	Our first observation is that this product can be computed much faster using best available fast {\em rectangular} matrix multiplication (FRMM) algorithms. This observation leads to our $(1+\eps)$-ASP algorithms for weighted graphs that significantly improve the state-of-the-art in all the three computational models that we consider
	(the centralized, PRAM, and distributed CC).
	In the centralized and PRAM models we also  need to convert matrix distance products into ordinary algebraic matrix products.
	This is, however, not difficult, and was accomplished, e.g., in \cite{Z02}. We employ the same methodology (of \cite{Z02}).
	In the distributed CC model there is no need in this conversion, because the processors are assumed to possess unbounded computational capacity. (Indeed, the algorithm of \cite{CDKL19} works directly with distance products.)
	We needed, however, to implement fast {\em rectangular} MM in this model. In \cite{CKK15} fast MM of {\em square} matrices was implemented in the distributed CC model. We argue that the same approach is applicable for fast rectangular MM as well. This leads to our improved $(1+\eps)$-ASP algorithms in the distributed CC model (cf. Table \ref{table:asp-CC}).

	Remarkably, while so far hopsets were used extensively in parallel/distributed/dynamic/streaming settings \cite{C00,B09,N14,HKN14,HKN16,EN16,EN19hop,CDKL19}, there were no known applications of hopsets in the classical centralized setting. Our results demonstrate that this powerful tool is extremely useful in the classical setting as well.

	\subsection{Organization}
	
	After reviewing some preliminary results in \sectionref{sec:prel}, we describe our algorithm for $(1+\epsilon)$-ASP for $S\times V$ in the standard centralized model in \sectionref{sec:MSP}. In \sectionref{sec:CC} we show our algorithm for $(1+\epsilon)$-ASP for $S\times V$ in the Congested Clique model that substantially improves the number of allowed sources while maintaining polylogarithmic time (and $\poly(\log\log n)$ for unweighted graphs). In \appendixref{sec:pram} we show our PRAM algorithm for $(1+\epsilon)$-ASP for $S\times V$, and in \appendixref{sec:knn} we show our PRAM algorithm for approximate (and exact) distances to $k$-NN.

	\section{Preliminaries}\label{sec:prel}

	\paragraph{Matrix Multiplication and Distance Product.} Fix an integer $n$. For $0\le r\le 1$,  let $w(r)$ denote the exponent of $n$ in the number of algebraic operations required to compute the product of an $n^r\times n$ matrix by an $n\times n$ matrix.
	
	Let $1 \le s,q\le n$.
	Let $A$ be an $s\times n$ matrix. We denote the entry in row $i$ and column $j$ of the matrix $A$ by $A_{ij}$. The transpose of $A$ is $A^T$. We use * to denote a wildcard, e.g., the notation $A_{*j}$ refers to the vector which is the $j$-th column of $A$.
	For an $n\times q$ matrix $B$, define the distance product $C=A\star B$ by
	\[
	C_{ij}=\min_{1\le k\le n}\{A_{ik}+B_{kj}\}~,
	\]
	for $1\le i\le s$ and $1\le j\le q$.
	We say that $C'$ is a $\delta$-approximation to $C$ if for all $i,j$,
	$C_{ij}\le C'_{ij}\le \delta\cdot C_{ij}$.

	The following theorem is implicit in \cite{Z02}. We will provide a sketch of the proof since we would like to apply it in parallel setting (and also for completeness).
	\begin{theorem}[\cite{Z02}]\label{lem:star}
		Let $M,R$ be positive integers. Let $A$ be an $n^r\times n$ matrix and $B$ an $n\times n$ matrix, whose entries are all in $\{1,...,M\}\cup\{\infty\}$. Then there is an algorithm that computes an ($1+\frac{1}{R}$)-approximation to $A\star B$ in time $\tilde{O}(R\cdot n^{w(r)}\cdot\log M)$.
	\end{theorem}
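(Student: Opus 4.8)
The plan is to reconstruct Zwick's scaling argument, which reduces an approximate min-plus product with entries as large as $M$ to $O(\log M)$ \emph{exact} min-plus products whose entries are integers in $\{0,1,\dots,O(R)\}$, and then reduces each such small-entry product to a single call to fast rectangular matrix multiplication (contributing the $n^{w(r)}$ factor as a black box).

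First I would record the small-entry reduction. Suppose $A,B$ have entries in $\{0,1,\dots,m\}\cup\{\infty\}$. Define integer matrices $A',B'$ by $A'_{ik}=(n+1)^{\,m-A_{ik}}$ and $B'_{kj}=(n+1)^{\,m-B_{kj}}$, with the convention that an $\infty$ entry becomes $0$. Compute $P=A'B'$ with the fast rectangular MM algorithm: this uses $n^{w(r)}$ algebraic operations, and since every entry of $A',B'$ is an $O(m\log n)$-bit integer and every entry of $P$ is a sum of at most $n$ products (hence also $O(m\log n)$ bits), the bit cost is $\tilde{O}(m\cdot n^{w(r)})$. Because $P_{ij}=\sum_k (n+1)^{\,2m-(A_{ik}+B_{kj})}$ is a sum of at most $n$ terms whose largest exponent is $2m-(A\star B)_{ij}$, the top term dominates, so $(A\star B)_{ij}=2m-\lfloor\log_{n+1}P_{ij}\rfloor$ whenever $P_{ij}\neq 0$, and $(A\star B)_{ij}=\infty$ when $P_{ij}=0$.

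Next I would set up the dyadic scaling. For each $\ell=0,1,\dots,\lceil\log M\rceil$, let $A^{(\ell)}_{ik}=\lceil A_{ik}R/2^{\ell}\rceil$ if $A_{ik}\le 2^{\ell}$ and $A^{(\ell)}_{ik}=\infty$ otherwise, and define $B^{(\ell)}$ analogously; all finite entries then lie in $\{1,\dots,R\}$. Apply the small-entry reduction with $m=R$ to obtain $C^{(\ell)}=A^{(\ell)}\star B^{(\ell)}$ in $\tilde{O}(R\cdot n^{w(r)})$ time, rescale to $\tilde{C}^{(\ell)}_{ij}=C^{(\ell)}_{ij}\cdot 2^{\ell}/R$, and output $C'_{ij}=\min_{\ell}\tilde{C}^{(\ell)}_{ij}$. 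For correctness, fix $i,j$, write $d=(A\star B)_{ij}$ with minimizer $k^{*}$, and assume $1\le d<\infty$ (the case $d=\infty$ is immediate). Since each scaled entry is at least $R/2^{\ell}$ times the original, $\tilde{C}^{(\ell)}_{ij}\ge d$ for every $\ell$, so $C'$ never underestimates. Taking $\ell$ with $2^{\ell-1}<d\le 2^{\ell}$, both $A_{ik^{*}},B_{k^{*}j}\le d\le 2^{\ell}$ are finite in the scaled matrices, so $C^{(\ell)}_{ij}\le A^{(\ell)}_{ik^{*}}+B^{(\ell)}_{k^{*}j}\le (A_{ik^{*}}+B_{k^{*}j})R/2^{\ell}+2=dR/2^{\ell}+2$, whence $\tilde{C}^{(\ell)}_{ij}\le d+2\cdot 2^{\ell}/R\le (1+4/R)\,d$. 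Thus $C'$ is a $(1+4/R)$-approximation to $A\star B$; running the construction with $4R$ in place of $R$ gives the claimed $(1+1/R)$-approximation at the cost of only a constant factor, absorbed by $\tilde{O}(\cdot)$. Summing over the $\lceil\log M\rceil+1$ products plus the final coordinate-wise minimum gives total time $\tilde{O}(R\cdot n^{w(r)}\log M)$; since the $\log M$ products are independent and fast rectangular MM parallelizes in polylogarithmic depth, the same quantity is a work bound for a polylog-time PRAM implementation, which is what we need later.

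The hard part will not be any single idea — embedding min-plus into $(n+1)$-adic arithmetic, dyadic scaling, and taking a minimum over scales are all standard — but rather the bookkeeping: verifying that the bit-length blow-up in the algebraic product contributes exactly one factor of $R$ (and not, say, $R\log M$) to the time, and checking that the minimum over scales is simultaneously never an underestimate and tight at the correct scale $2^{\ell}\approx d$.
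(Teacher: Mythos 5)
Your proposal is correct and follows essentially the same approach as the paper's sketch: the $(n+1)$-adic encoding of AGM97/Zwick to reduce an exact min-plus product with entries bounded by $m$ to a single $n^{w(r)}$-operation algebraic product at bit cost $\tilde O(m\,n^{w(r)})$, followed by Zwick's dyadic scaling to $O(\log M)$ instances with entries bounded by $O(R)$, taking an entry-wise minimum over scales. The only difference is presentational: the paper cites \cite{Z02} for the correctness of the scaling, whereas you spell out the $(1+O(1/R))$-approximation bound, and your scaling index is shifted by $\log R$ relative to the paper's (cutoff $2^\ell$ with scale $R/2^\ell$ versus cutoff $R\cdot 2^k$ with scale $1/2^k$), which is immaterial.
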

	\begin{proof}[Sketch]
		It was shown in \cite{AGM97} that the distance product $C=A\star B$ can be computed by defining $\hat{A}_{ij}=(n+1)^{M-A_{ij}}$ and similarly $\hat{B}_{ij}$. Then $C$ can be derived from $\hat{C}=\hat{A}\cdot \hat{B}$ by $C_{ij}=2M-\lfloor\log_{n+1}\hat{C}_{ij}\rfloor$. Since the values of entries in the matrices $\hat{A}$ and $\hat{B}$ are quite large, each algebraic operation (when computing the standard product $\hat{A}\cdot \hat{B}$) will take $O(M\log n)$ time. So the running time will be $\tilde{O}(M\cdot n^{w(r)})$.
		
		In order to reduce the dependence on the maximal weight $M$ to logarithmic, one can apply scaling: for each $k=0,1,\dots,\log M-\log R$, define $A'$ by setting \[
		A'_{ij}=\left\{\begin{array}{ccc}\lceil A_{ij}/2^k\rceil & A_{ij}\le R\cdot 2^k\\\infty&\textrm{otherwise}\end{array}\right.
		\]
		and similarly define $B'$. Then compute $C'=A'\star B'$ by the above method. \cite{Z02} showed that a ($1+\frac{1}{R}$)-approximation to $A\star B$ can be obtained by taking the scaled up minimum (entry-wise) of all these $C'$. The point is that each $A'$ and $B'$ have entries in $\{1,\dots,R\}\cup\{\infty\}$. Hence the running time is indeed $\tilde{O}(R\cdot n^{w(r)}\cdot\log M)$ (the factor $O(\log M)$ comes from the number of different choices of $k$).
	\end{proof}	
	
	\paragraph{Witnesses.} Given an $s\times n$ matrix $A$ and an $n\times q$ matrix $B$, an $s\times q$ matrix $W$ is called a {\em witness} for $C=A\star B$ if for all $i,j$, $C_{ij}=A_{iW_{ij}}+B_{W_{ij}j}$. It was shown in \cite{GM93,Z02} how to compute the matrix $W$ in almost the same time required to compute $C$ (up to logarithmic factors). This holds also for a witness for $C'$ which is a $c$-approximation for $C$ (see \cite[Section 8]{Z02}), for some $c \ge 1$.	The witness can assist us in recovering the actual paths, rather than just reporting distance estimates. Since computing witnesses is done by an appropriate distance product, these witnesses can also be efficiently computed in the PRAM model.
	
	\paragraph{Hopsets.}
	
	Recall the definition of hopsets in the beginning of \sectionref{sec:tech}.
	A randomized construction of hopsets was gives in \cite{C00}, see also \cite{N14,HKN16,EN16}. The following version was shown in \cite{EN19hop}.
	
	\begin{theorem}[\cite{EN19hop}]\label{thm:hopset-trade}
		For any weighted undirected graph $G = (V, E)$ on $n$ vertices and parameter $\kappa>1$, there is a randomized algorithm running in time  $\tilde{O}(|E|\cdot n^{1/\kappa})$, that computes $H$ of size $O(n^{1+1/\kappa})$, which is an $(\epsilon,\beta)$-hopset (for every $0<\epsilon<1$ simultaneously) with $\beta =\left(\frac{\kappa}{\epsilon}\right)^{O(\kappa)}$.
	\end{theorem}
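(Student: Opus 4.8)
The plan is to construct $H$ by the hierarchical-sampling paradigm of Thorup and Zwick, in the form adapted to hopsets by Cohen and by Elkin--Neiman. First I would fix a hierarchy $V = A_0 \supseteq A_1 \supseteq \dots \supseteq A_\ell \supseteq A_{\ell+1} = \emptyset$ with $\ell = \Theta(\kappa)$ levels, where $A_{i+1}$ keeps each vertex of $A_i$ independently with probability $n^{-1/\kappa}$ (so $\Expect[|A_i|] = n^{1-i/\kappa}$, and with high probability $|A_\ell|$ is tiny; one may also truncate so that $A_\ell$ is a single fixed vertex). For every $v$ and level $i$ let $p_i(v)$ be a closest vertex of $A_i$ to $v$ in $G$, its \emph{level-$i$ pivot}, and let the \emph{level-$i$ bunch} be $B_i(v) = \{u \in A_i : \dG(v,u) < \dG(v,A_{i+1})\}$ (so $B_\ell(v) = A_\ell$). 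The hopset $H$ consists, for every $v$ and every $i$, of the edge $(v,u)$ of weight $\dG(v,u)$ for each $u \in B_i(v) \cup \{p_i(v)\}$. Note the construction never looks at $\eps$: that parameter enters only the hop-bound analysis, which is exactly what gives the ``for every $0<\eps<1$ simultaneously'' clause.

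Second I would check the size and running time. By the standard Thorup--Zwick computation, conditioned on $A_i$ the event $u \in B_i(v)$ is that $u$ precedes the first success among independent $n^{-1/\kappa}$-coins over the vertices of $A_i$ listed by increasing distance from $v$, so $\Expect[|B_i(v)|] = O(n^{1/\kappa})$; summing over $v$ and the $\Theta(\kappa)$ levels yields $\Expect[|H|] = O(\kappa\, n^{1+1/\kappa})$, which is $O(n^{1+1/\kappa})$ for constant $\kappa$ (the $\Theta(\kappa)$ factor can be shaved in general with the more careful accounting of \cite{EN19hop}). For the running time, level-$i$ pivots and bunches are obtained as in the Thorup--Zwick cluster computation: one Dijkstra/exploration from the super-source $A_{i+1}$ gives all $\dG(v,A_{i+1})$, then a modified Dijkstra from $A_i$ relaxes $v$ only while its tentative label is below $\dG(v,A_{i+1})$; charging work to edges incident to vertices landing in some level-$i$ bunch and using $\sum_v|B_i(v)| = O(n^{1+1/\kappa})$ per level gives total time $\tilde{O}(|E|\cdot n^{1/\kappa})$.

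Third, and at the heart of the matter, I would prove the hop bound by induction on the level, with an invariant of roughly the form: whenever $p_i(x)$ is within an $O(\eps/\kappa)$-fraction of $\dG(x,y)$ from $x$, there is an $x$--$y$ path in $G\cup H$ with at most $h_i$ hops and length at most $(1+\gamma_i)\dG(x,y)$, where $h_i \le \mathrm{poly}(\kappa/\eps)\cdot h_{i+1}$ and $\gamma_i = \gamma_{i+1} + O(\eps/\kappa)$; the base case $i=\ell$ is immediate since $A_\ell$ is a single near-universal pivot reachable in one hop. In the inductive step one subdivides a shortest $x$--$y$ path into $\mathrm{poly}(\kappa/\eps)$ consecutive pieces and processes them left to right: if a piece's near endpoint $z$ has $p_i(z)$ within an $O(\eps/\kappa)$-fraction of that piece's length, the hopset edges at $p_i(z)$ give a short detour through $p_i(z)$ that overshoots the piece by only an $O(\eps/\kappa)$-fraction; if instead $p_i(z)$ is far, the bunch definition together with the triangle inequality forces the higher-level pivot $p_{i+1}(z)$ to be comparably close, so we ascend to level $i+1$ via the always-present edge $(z,p_{i+1}(z))\in H$ and invoke the induction hypothesis. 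Summing the $O(\eps/\kappa)$-fraction overshoots over the pieces costs an $O(\eps/\kappa)$-fraction of $\dG(x,y)$ per level, so over $\Theta(\kappa)$ levels the stretch is $(1+O(\eps/\kappa))^{\Theta(\kappa)} = 1+O(\eps)$ (rescale $\eps$), while the hop count is multiplied by a $\mathrm{poly}(\kappa/\eps)$ factor at each of the $\Theta(\kappa)$ levels, giving $\beta = (\kappa/\eps)^{O(\kappa)}$.

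I expect this last step to be the main obstacle. One has to design the inductive invariant so that the ``far pivot'' case \emph{provably} produces an ascent --- this is precisely where the strict-vs-nonstrict cutoff in the bunch definition and the inclusion of all pivot edges $(\cdot,p_i(\cdot))$ in $H$ do the work --- and so that the additive error and the hop count both compose across $\Theta(\kappa)$ levels losing only a $\mathrm{poly}(\kappa/\eps)$ factor per level, rather than, say, squaring the hop count per level (which would make $\beta$ a tower in $\kappa$ instead of $(\kappa/\eps)^{O(\kappa)}$). The sharpening of the size to exactly $O(n^{1+1/\kappa})$ and the derandomization (cf. \cite{EM20}) are comparatively routine and can be imported from the cited constructions.
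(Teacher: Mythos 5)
This theorem is imported from \cite{EN19hop} and stated here without proof, so there is no in-paper proof to compare against; what you are really doing is reconstructing the argument of \cite{EN19hop}. Your construction (Thorup--Zwick-style hierarchy $A_0 \supseteq \dots \supseteq A_\ell$ with sampling rate $n^{-1/\kappa}$, pivots $p_i(\cdot)$, bunches $B_i(\cdot)$, and hopset edges from each vertex to its bunch members and pivots) is the right one, your observation that $\eps$ never enters the construction is exactly why ``for every $0<\eps<1$ simultaneously'' holds, and the size and running-time estimates via the usual TZ accounting and truncated Dijkstras are sound.

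The hop-bound sketch, however, has a logical error at its core. You write: ``if instead $p_i(z)$ is far, the bunch definition together with the triangle inequality forces the higher-level pivot $p_{i+1}(z)$ to be comparably close, so we ascend.'' This implication is impossible as stated: since $A_{i+1}\subseteq A_i$, one always has $d_G(z,p_{i+1}(z)) \ge d_G(z,p_i(z))$, so a far $p_i(z)$ makes $p_{i+1}(z)$ at least as far, never closer. The correct dichotomy is not about the distance to $p_i(z)$ at all; it is about whether the \emph{bunch} $B_i(z)$ reaches the next waypoint. Concretely, if the intended level-$i$ waypoint $w\in A_i$ along the path lies in $B_i(z)$, a single hopset edge $(z,w)$ covers the piece; if $w\notin B_i(z)$, then by the bunch definition $d_G(z,p_{i+1}(z))\le d_G(z,w)$, so $p_{i+1}(z)$ is \emph{at most as far as the piece length} (not ``comparably close to $p_i(z)$''), and one ascends to level $i+1$. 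Relatedly, your ``near-pivot'' case does not actually make progress: an edge from $z$ to a pivot $p_i(z)$ that is \emph{close to $z$} does not by itself help you traverse the piece toward $y$; what covers a piece is a bunch edge spanning it, or an ascent. This mis-specified case split is precisely the place where, as you anticipate, the argument either works or collapses, so it is a genuine gap rather than a cosmetic slip. If you restate the inductive invariant in terms of how far the level-$i$ bunch of the current waypoint reaches along the remaining path (equivalently, an upper bound on $d_G(z,p_{i+1}(z))$ relative to the remaining distance), the $\mathrm{poly}(\kappa/\eps)$-per-level recursion and the $(\kappa/\eps)^{O(\kappa)}$ final hopbound go through as in \cite{EN19hop}.
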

	We note that a forthcoming paper \cite{EM20} provides a  {\em deterministic} construction of hopsets with similar properties.
	There are two differences, which have essentially no effect on our result. First, the hopbound in
	\cite{EM20} is  $\beta =\left(\frac{\log n}{\epsilon}\right)^{O(\kappa)}$. Second, the construction there accepts $\eps > 0$ as a part of its input.
	Nevertheless, their hopsets can be used to make our results in PRAM deterministic, with essentially the same parameters. Our centralized algorithm can also be made deterministic using a hopset construction from \cite{HKN16}.
	
	%

	\paragraph{Eliminating Dependence on Aspect Ratio.} The aspect ratio of a graph $G$ is the ratio between the largest to smallest edge weight. A well-known reduction by \cite{KS97} asserts that to compute ($1+\epsilon$)-approximate shortest paths in $G=(V,E)$ with $|V|=n$, it suffices to compute ($1+\epsilon$)-approximate shortest paths in a collection of at most $\tilde{O}(|E|)$ graphs $\{G_t\}$. The total number of (non-isolated) vertices in all these graphs is $O(n\log n)$, the total number of edges is $\tilde{O}(|E|)$, and the aspect ratio of each graph is $O(n/\epsilon)$. This reduction can be performed in parallel (PRAM EREW) within $O(\log^2n)$ rounds and work $O(|E|)$. Thus it can also be done in the standard centralized model in $\tilde{O}(|E|)$ time. See also \cite[Section 4]{EN16} for more details.
	Since in our algorithms the dependence on the aspect ratio will be logarithmic, in all that follows we may assume $M=\poly(n)$.

	\section{Multi-Source Shortest Paths}\label{sec:MSP}
	
	Let $G=(V,E,w)$ be a weighted undirected graph
	and fix a set of $s$ sources $S\subseteq V$. 
	We compute a ($1+\epsilon$)-approximation for all distances in $S\times V$, by executing \algref{alg:MSP}.

	\begin{algorithm}[ht]
		\caption{$\texttt{ASP}(G,S,\epsilon)$}\label{alg:MSP}
		\begin{algorithmic}[1]
			\STATE Let $H$ be an $(\epsilon,\beta)$-hopset for $G$;
			\STATE Set $R=\beta/\epsilon$;
			\STATE Let $A$ be the adjacency matrix of $G\cup H$;
			\STATE Let $B^{(1)}=A_{S*}$;
			\FOR {$t$ from $1$ to $\beta-1$}
			\STATE Let $B^{t+1}$ be a $(1+1/R)$-approximation to $B^{(t)}\star A$;
			\ENDFOR
			\RETURN $B^{(\beta)}$;
		\end{algorithmic}
	\end{algorithm}

	The first step is to compute an $(\epsilon,\beta)$-hopset $H$, for a parameter $\kappa\ge 1$ with $\beta=\left(\frac{\kappa}{\epsilon}\right)^{O(\kappa)}$ as in \theoremref{thm:hopset-trade}. Let $A$ be the adjacency matrix of $G\cup H$ and fix $R=\beta/\epsilon$. For every integer $1\le t\le \beta$, let $B^{(t)}$ be an $s\times n$ matrix such that for all $i\in S$ and $j\in V$, $B^{(t)}_{ij}$ is a $(1+\frac{1}{R})^{t-1}$-approximation to $d_{G\cup H}^{(t)}(i,j)$. Note that $B^{(1)}=A_{S*}$ is a submatrix of $A$ containing only the rows corresponding to the sources $S$. 
	
	The following claim asserts that taking an approximate distance product of $B^{(t)}$ with the adjacency matrix yields $B^{(t+1)}$.
	
	\begin{claim}\label{claim:star}
		Let $c,c'\ge 1$. Let $A$ be the adjacency matrix of an $n$-vertex graph $G=(V,E)$, and let $B$ be an $s\times n$ matrix (whose rows correspond to $S\subseteq V$) so that for all $i,j$, $B_{ij}$ is a $c$-approximation to $d_G^{(t)}(i,j)$, for some positive integer $t$. Let $C=B\star A$ and $C'$ be a $c'$-approximation to $C$.  Then, for all $i,j$,  $C'_{ij}$ is a $c\cdot c'$-approximation to $d_G^{(t+1)}(i,j)$.
	\end{claim}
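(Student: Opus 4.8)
The plan is to verify the two inequalities in the definition of $c\cdot c'$-approximation separately, i.e. to show $d_G^{(t+1)}(i,j) \le C'_{ij} \le c\cdot c'\cdot d_G^{(t+1)}(i,j)$ for every $i,j$. The main idea is that an approximate distance product propagates the approximation guarantee across one additional hop, so composing a $c$-approximation of $t$-bounded distances with an (exact) adjacency matrix gives a $c$-approximation of $(t{+}1)$-bounded distances, and then the extra $c'$-factor from the approximate product is simply multiplied on top.

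First I would establish the \emph{lower bound} (the side where $C'$ is not too small). Since $C'$ is a $c'$-approximation to $C$, we have $C'_{ij} \ge C_{ij} = \min_k \{B_{ik} + A_{kj}\}$. For each $k$, $B_{ik} \ge d_G^{(t)}(i,k)$ (as $B_{ik}$ is a $c$-approximation to $d_G^{(t)}(i,k)$ with $c\ge 1$) and $A_{kj} \ge d_G^{(1)}(k,j)$ (the adjacency matrix entry is exactly $w(k,j)$ if $(k,j)\in E$, else $\infty$, so it is at least the true one-hop distance, with the convention $A_{jj}=0$). Hence $B_{ik}+A_{kj} \ge d_G^{(t)}(i,k) + d_G^{(1)}(k,j) \ge d_G^{(t+1)}(i,j)$, because concatenating a $\le t$-edge $i$-$k$ walk with a $\le 1$-edge $k$-$j$ edge yields a $\le (t{+}1)$-edge $i$-$j$ walk. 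Taking the minimum over $k$ gives $C_{ij} \ge d_G^{(t+1)}(i,j)$, and therefore $C'_{ij} \ge d_G^{(t+1)}(i,j)$.

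Next I would establish the \emph{upper bound}. Fix $i,j$ and let $P$ be a shortest $(t{+}1)$-bounded $i$-$j$ path, so $w(P) = d_G^{(t+1)}(i,j)$. Let $k$ be the second-to-last vertex on $P$ (or $k=j$ if $P$ has fewer than $t{+}1$ edges / is empty), so that the prefix of $P$ up to $k$ is a $t$-bounded $i$-$k$ walk and the last step $(k,j)$ is an edge (or a zero-length step). Then $d_G^{(t)}(i,k) + A_{kj} \le w(P\text{-prefix}) + w(k,j) = w(P) = d_G^{(t+1)}(i,j)$. Using that $B_{ik} \le c\cdot d_G^{(t)}(i,k)$ and $c\ge 1$, we get $B_{ik} + A_{kj} \le c\big(d_G^{(t)}(i,k) + A_{kj}\big) \le c\cdot d_G^{(t+1)}(i,j)$, hence $C_{ij} \le B_{ik}+A_{kj} \le c\cdot d_G^{(t+1)}(i,j)$. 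Finally, since $C'$ is a $c'$-approximation to $C$, $C'_{ij} \le c'\cdot C_{ij} \le c\cdot c'\cdot d_G^{(t+1)}(i,j)$, completing the proof.

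The argument is essentially bookkeeping; the only place requiring a moment of care is the boundary handling in the upper bound — making sure the diagonal entries of $A$ are treated as $0$ so that a path using fewer than $t{+}1$ edges can still be "padded" to length $t{+}1$, and that the degenerate case $i=j$ (or $k=j$) is covered. I would state the convention $A_{jj}=0$ explicitly (consistent with $A$ being the adjacency matrix of $G\cup H$ used in \algref{alg:MSP}) and note that $d_G^{(t)}(i,i)=0$, which makes both the lower and upper bounds go through uniformly. No deeper obstacle is expected.
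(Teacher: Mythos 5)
Your proof is correct and follows essentially the same route as the paper: decompose the shortest $(t{+}1)$-bounded $i$--$j$ path at its penultimate vertex $k$, use the $c$-approximation on $B_{ik}$ plus the exact edge weight $A_{kj}$ for the upper bound, and then tack on the $c'$ factor. You are somewhat more explicit than the paper about the lower-bound side (every term $B_{ik'}+A_{k'j}\ge d_G^{(t+1)}(i,j)$, hence so is the min) and about the $A_{jj}=0$ / short-path boundary convention, both of which the paper leaves implicit, but there is no difference in substance.
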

	
	\begin{proof}
		Consider a pair of vertices $i\in S$ and $j\in V$. By definition of the $\star$ operation, $C_{ij}=\min_{1\le k\le n}\{B_{ik}+A_{kj}\}$.
		Let $\pi$ be the shortest path in $G$ from $i$ to $j$ that contains at most $t+1$ edges, and let $k\in V$ be the last vertex before $j$ on $\pi$. Since $B_{ik}$ is a $c$-approximation to $d_G^{(t)}(i,k)$ and $A_{kj}$ is the edge weight of $\{k,j\}$, we have that $B_{ik}+A_{kj}$ is a $c$-approximation to $d_G^{(t+1)}(i,j)$.
		The assertion of  the claim follows since $C_{ij}\le C'_{ij}\le c'\cdot C_{ij}$.
	\end{proof}
	
	Given $B^{(t)}$, we compute $B^{(t+1)}$ as a ($1+\frac{1}{R}$)-approximation to $B^{(t)}\star A$. Using \theoremref{lem:star} this can be done within $\tilde{O}(R\cdot n^{w(r)})$ rounds. Thus, the total running time to compute $B^{(\beta)}$ is
	\[
	O(\beta\cdot R\cdot n^{w(r)})=n^{w(r)}\cdot(\kappa/\epsilon)^{O(\kappa)}
	\]

	By \claimref{claim:star}, $B^{(\beta)}$ is a $(1+\frac{1}{R})^{\beta-1}\le e^\epsilon=1+O(\epsilon)$ approximation to $d_{G\cup H}^{(\beta)}(u,v)$ for all $u\in S$ and $v\in V$. Since $H$ is an $(\epsilon,\beta)$-hopset, $B^{(\beta)}$ is a $(1+O(\epsilon))$-approximation to $d_G(u,v)$, for all $u \in S$, and $v \in V$.
	
	

	\paragraph{Reporting paths.}
	For each approximate distance in $S\times V$ we can also report a path in $G$ achieving this distance. To this end, we compute witnesses for each approximate distance product, and as in \cite[Section 5]{Z02} there is an algorithm that can report, for any $u,v\in V$, a path in $G\cup H$ of length at most $(1+\epsilon)\cdot d_{G \cup H}(u,v)$.
	In order to translate this to a path in $G$, we need to replace the hopset edges by corresponding paths in $G$. We use the fact that the hopsets of \cite{EN19hop} have a path reporting property. That is, each hopset edge of weight $W'$ has a corresponding path $\pi$ of length $W'$ in $G$, and every vertex on $\pi$ stores its neighbors on the path. Thus, we can obtain a $u-v$ path in $G$ in time proportional to its number of edges.
	
	
	With conclude with the following theorem.
	\begin{theorem}\label{thm:imp-sources}
		Let $G=(V,E)$ be a weighted undirected graph, fix $S\subseteq V$ of size $n^r$ for some $0\le r\le 1$, and let $0<\epsilon<1$. Then for any $\kappa\ge 1$, there is a deterministic algorithm that computes a ($1+\epsilon$)-approximation to all distances in $S\times V$ that runs in time
		\[
		\tilde{O}(\min\{n^{w(r)}\cdot\left(\kappa/\epsilon\right)^{O(\kappa)},|E|\cdot n^{1/\kappa}\})~.
		\]
		Furthermore, for each pair in $S\times V$, a path achieving the approximate distance can be reported in time proportional to the number of edges in it.
	\end{theorem}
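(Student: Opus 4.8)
The plan is to verify that \algref{alg:MSP}, run with the given parameter $\kappa$, fulfils all three requirements of the theorem. Essentially every ingredient has already been established in the discussion preceding the statement, so the proof is mainly an assembly. First, by the reduction recalled in the ``Eliminating Dependence on Aspect Ratio'' paragraph (\cite{KS97}) we may assume, at an additive cost of $\tilde O(|E|)$, that the weights are integers in $\{1,\dots,M\}\cup\{\infty\}$ with $M=\poly(n)$; a routine argument bounds the total cost over the $\tilde O(|E|)$ resulting sub-instances, which have $O(n\log n)$ vertices in total, by the single-instance bound below. Next build an $(\epsilon,\beta)$-hopset $H$ for $G$ with $\beta=(\kappa/\epsilon)^{O(\kappa)}$ via \theoremref{thm:hopset-trade} in time $\tilde O(|E|\cdot n^{1/\kappa})$ --- or, to make the whole algorithm deterministic, via the deterministic construction of \cite{EM20} (or \cite{HKN16}), whose slightly larger hopbound $\beta=(\log n/\epsilon)^{O(\kappa)}$ is absorbed into the same asymptotics. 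Then set $R=\beta/\epsilon$, let $A$ be the adjacency matrix of $G\cup H$, and, starting from $B^{(1)}=A_{S*}$, iterate $B^{(t+1)}\leftarrow$ a $(1+1/R)$-approximate distance product $B^{(t)}\star A$ for $t=1,\dots,\beta-1$.

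Correctness is an induction on $t$ establishing that $B^{(t)}_{ij}$ is a $(1+1/R)^{t-1}$-approximation of $d_{G\cup H}^{(t)}(i,j)$ for all $i\in S$ and $j\in V$: the base case $t=1$ is immediate since $B^{(1)}=A_{S*}$ records the $1$-bounded distances exactly, and the inductive step is precisely \claimref{claim:star}, applied with $c=(1+1/R)^{t-1}$ (the inductive hypothesis) and $c'=1+1/R$ (the guarantee of \theoremref{lem:star} for the approximate product). At $t=\beta$ the accumulated error is $(1+1/R)^{\beta-1}\le e^{(\beta-1)/R}\le e^{\epsilon}=1+O(\epsilon)$, using $R=\beta/\epsilon$, so $B^{(\beta)}$ is a $(1+O(\epsilon))$-approximation of $d_{G\cup H}^{(\beta)}(i,j)$; since $H$ is an $(\epsilon,\beta)$-hopset, inequality~\eqref{eq:hopsetdef} upgrades this to a $(1+O(\epsilon))$-approximation of $d_G(i,j)$. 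Running the whole algorithm with $\epsilon$ replaced at the outset by a sufficiently small constant multiple of itself then yields exactly $(1+\epsilon)$.

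For the running time, each of the $\beta-1$ iterations computes a $(1+1/R)$-approximate distance product of an $n^r\times n$ matrix by an $n\times n$ matrix with $\poly(n)$-sized entries, costing $\tilde O(R\cdot n^{w(r)})$ by \theoremref{lem:star}; summing over the iterations and adding the hopset construction cost gives $\tilde O\bigl(|E|\cdot n^{1/\kappa}+\beta R\cdot n^{w(r)}\bigr)=\tilde O\bigl(|E|\cdot n^{1/\kappa}+n^{w(r)}\cdot(\kappa/\epsilon)^{O(\kappa)}\bigr)$, the asserted bound in its additive form (the form used in the abstract and in \tableref{table:asp1}). Path reporting is handled exactly as in the ``Reporting paths'' paragraph: compute a witness matrix alongside each approximate distance product (the ``Witnesses'' paragraph, via \theoremref{lem:star}), use these as in \cite[Section~5]{Z02} to extract a $u-v$ path in $G\cup H$ of length at most $(1+\epsilon)\cdot d_{G\cup H}(u,v)$, and finally replace each hopset edge by its equal-weight underlying $G$-path stored by the construction of \cite{EN19hop}, obtaining a $G$-path in time proportional to its number of edges.

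The one genuinely delicate ingredient --- and hence the main obstacle, since the rest is routine --- is the compounding of the per-iteration multiplicative slack: $R$ must be chosen large enough that $\beta-1$ factors of $(1+1/R)$ multiply out to only $1+O(\epsilon)$, yet small enough that the $\tilde O(R)$ overhead per product keeps the total at $n^{w(r)}\cdot(\kappa/\epsilon)^{O(\kappa)}$, and this balance is viable only because \theoremref{thm:hopset-trade} supplies a hopbound $\beta$ merely polynomial in $\kappa/\epsilon$. A secondary point is that the statement writes the cost as a minimum of its two summands rather than their sum; I would either present the bound in the additive form above, or --- if the minimum is intended --- supply an extra argument covering the regime where $|E|\cdot n^{1/\kappa}$ dominates.
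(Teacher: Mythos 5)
Your proposal is correct and follows essentially the same route as the paper: run \algref{alg:MSP} with a hopset from \theoremref{thm:hopset-trade}, prove correctness by induction via \claimref{claim:star}, bound the compounded error $(1+1/R)^{\beta-1}\le e^\epsilon$ by the choice $R=\beta/\epsilon$, invoke the hopset property, sum the running times of the hopset construction and the $\beta$ applications of \theoremref{lem:star}, and report paths through witnesses and the path-reporting property of \cite{EN19hop}. Your flag about the $\min$ in the stated bound is well-founded --- the paper's own running-time analysis gives the sum $\tilde O(|E|\cdot n^{1/\kappa}+n^{w(r)}\cdot(\kappa/\epsilon)^{O(\kappa)})$, and both the abstract and the remark immediately after the theorem state the bound additively, so the $\min$ appears to be a typo for $+$ (or $\max$).
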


	One may choose $\kappa$ as a slowly growing function of $n$, e.g. $\kappa=(\log\log n)/\log\log\log n$, so that $\kappa^\kappa\le\log n$ and $n^{1/\kappa}=n^{o(1)}$, and obtain running time $\tilde{O}(n^{\omega(r)}+|E|\cdot n^{o(1)})$ (for a constant $\eps > 0$).
	We stress that for all $r\le 0.313$, a result of \cite{GU18} gives that $w(r)=2+o(1)$. So even for polynomially large set of sources $S$, with size up to  $n^{0.313}$, our algorithm computes ($1+\epsilon$)-approximate distances  $S\times V$ in time $n^{2+o(1)}$. In fact, for all $r<1$, our bound  improves the current bound
	for $(1+\eps)$-APASP \cite{Z02}.


	Observe that if $r>0.313$, then we can choose $\kappa$ as a large enough constant, so that the running time to compute the hopset, which is $\tilde{O}(|E|\cdot n^{1/\kappa})$, is dominated by $n^{w(r)}$. Alternatively, if $|E|\le n^{2-\delta}$ we may choose $\kappa=1/\delta$, so the running time to compute the hopset will be $\tilde{O}(n^2)=\tilde{O}(n^{w(r)})$ for all $0\le r\le 1$. In both cases we obtain $\beta=(1/\epsilon)^{O(1)}$, so our algorithm to compute $(1+\epsilon)$-approximate shortest paths for $S\times V$ will have running time
	$
	\tilde{O}(n^{w(r)}/\epsilon^{O(1)})
	$.


	\section{Improved ASP for $S\times V$ in the Congested Clique Model}\label{sec:CC}
	

	In this section we show how to improve the ($1+\epsilon$)-ASP for $S\times V$ results of \cite{CDKL19} and \cite{DP20} in the Congested Clique model. Specifically, we show that given a weighted graph $G=(V,E)$ and a set of $S\subseteq V$ sources of size $|S|=n^r$, there is an $\poly(\log n)$ time algorithm to compute ($1+\epsilon$)-ASP for $S\times V$ as long as $r<(1+\alpha)/2\approx 0.655$.  For unweighted graphs, we obtain an improved running time of $\poly(\log\log n)$.
	More generally, for  $S$ of arbitrary size, $|S| = n^r$, the running time is given by
	$\tilde{O}(n^{f(r)})$, where the function $f(r)$ grows from 0 to $1 - {2 \over \omega} \approx 0.158$. (See Table \ref{table:asp-CC} for more details.)
	
	A polylogarithmic running time (respectively, $\poly(\log\log n)$ time for unweighted graphs), was obtained only for $r\le 1/2$ in \cite{CDKL19} (resp., \cite{DP20}). More generally their running time for arbitrary $S$ is $\tilde{O}(\frac{|S|^{2/3}}{n^{1/3}})$.
	
	To achieve these improvements, we use the method of \cite{CKK15} combined with fast rectangular matrix multiplication. We start by devising the fast rectangular MM algorithm in the Congested Clique, summarized in the following theorem.
	
	Whenever $A$ is an $n^r\times n$ matrix, it will be convenient to think of it as an $n\times n$ matrix with the last rows containing only zeros.
	
	\begin{theorem}\label{thm:CCRMM}
		Let $G=(V,E)$ be an $n$-vertex graph, and fix $0<r\le 1$. Let $A$ be an $n^r\times n$ and $B$ an $n\times n$ matrices with entries in $\{1,2,...,M\}$, so that each $x\in V$ holds a unique row of $A$ and of $B$. Then for any $R\ge 1$ there is a deterministic algorithm in the Congested Clique that computes $(1+1/R)$-approximation to $A\star B$ in $O(R\cdot n^{1-2/\omega(r')}\cdot\log M)$ rounds, where $r'$ is the solution to the equation:
		\[
		r' = 1-(1-r)\cdot\omega(r')~.
		\]
		(Recall that $\omega(r')$ is the exponent for $n^{r'}\times n$ MM.)
	\end{theorem}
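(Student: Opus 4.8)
The plan is to reduce the approximate distance product to an exact algebraic (ring) product via Zwick's scaling trick (as in \theoremref{lem:star}), and then to implement fast \emph{rectangular} ring matrix multiplication in the Congested Clique by adapting the square-matrix scheme of \cite{CKK15}, whose engine is Lenzen's deterministic routing. First I would invoke the reduction behind \theoremref{lem:star}: a $(1+1/R)$-approximation to $A\star B$ is obtained from $O(\log M)$ exact algebraic products $\hat A\cdot\hat B$, where $\hat A$ is $n^r\times n$, $\hat B$ is $n\times n$, and every entry -- as well as every intermediate quantity produced while multiplying them -- is a nonnegative integer bounded by $(n+1)^{O(R)}$ and hence fits in $O(R)$ machine words of $O(\log n)$ bits. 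Since in the Congested Clique a ``long'' element occupying $O(R)$ words can be shipped with $O(R)$ single-word messages, it suffices to show that one $n^r\times n$ by $n\times n$ ring product whose entries are single words can be computed in $O(n^{1-2/\omega(r')})$ rounds; the $O(R)$ packing overhead and the $O(\log M)$ scaling levels then yield the bound of the theorem. (We may assume $r\ge 1/2$, padding $\hat A$ with zero rows if needed; for $r<1/2$ the defining equation for $r'$ has no nonnegative root, but then $\hat A$ is thin enough that the product is computable even faster, within the stated bound.)

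For the ring product, recall from \cite{CKK15} that Lenzen's routing delivers, in $O(1)$ rounds, any set of messages in which every vertex is the source and the destination of $O(n)$ messages; consequently a communication pattern with maximum per-vertex load $L$ words runs in $O(\lceil L/n\rceil)$ rounds, and one moving $T$ words with balanced load runs in $O(T/n^2)$ rounds. I would partition $\hat A$ into a $c^{r'}\times c$ grid of equal blocks and $\hat B$ into a $c\times c$ grid, then run a fast bilinear algorithm for $c^{r'}\times c$ by $c\times c$ matrix multiplication at the block level. This reduces the product to $c^{\omega(r')}$ ``block products'' -- each the product of a ring-linear combination of $\hat A$-blocks by a ring-linear combination of $\hat B$-blocks -- together with the linear combinations that form these inputs and the ones that assemble the $c^{r'}\times c$ grid of output blocks; the total number of blocks touched across all of these linear combinations is $\tilde{O}(c^{\omega(r')})$. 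Each vertex is assigned one block product, which it computes locally (computation is free in this model). Taking $c=n^{1/\omega(r')}$ makes the number of block products exactly $n$, one per vertex (and, since $\omega(r')\ge 1+r'$, also $c^{1+r'}\le n$, so the output blocks fit one per vertex too); requiring in addition that the two factors of each block product be \emph{square} forces $n^r/c^{r'}=n/c$, i.e.\ precisely the equation $r'=1-(1-r)\,\omega(r')$ of the theorem, whereupon every block is $(n/c)\times(n/c)$ with $O(n^{2-2/\omega(r')})$ entries. All the required data movement -- shipping $\hat A$- and $\hat B$-blocks to the vertices that combine and multiply them, and gathering the block products for the output combinations -- moves $\tilde{O}(c^{\omega(r')}\cdot n^{2-2/\omega(r')})=\tilde{O}(n^{3-2/\omega(r')})$ ring elements with balanced per-vertex load, hence costs $O(n^{1-2/\omega(r')})$ rounds (the polylogarithmic factors being dominated), exactly as in \cite{CKK15}. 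Determinism is inherited from Lenzen's routing and from the scaling reduction.

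I expect the crux to be this second step: aligning the combinatorial structure of a \emph{rectangular} bilinear algorithm with the communication primitive that is cheap in the Congested Clique. One must pick the block grid so that simultaneously (i) there are at most $n$ block products, so each fits at its own vertex; (ii) the per-vertex communication load -- both for receiving the inputs of its block product and for feeding the output linear combinations -- is $\tilde{O}(n^{2-2/\omega(r')})$; and (iii) the two small factor-matrices of each block product are balanced, so that computing and shipping them is never the bottleneck. Conditions (i) and (ii) are exactly what pin down $c=n^{1/\omega(r')}$ and the equation $r'=1-(1-r)\omega(r')$, while (iii) is what makes the block factors square. The remaining issues -- the $n^{o(1)}$ (or polylog) slack inherent in fast rectangular MM, the $O(R)$-word packing, and the $r<1/2$ corner case -- are routine and absorbed into the stated bound.
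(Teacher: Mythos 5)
Your proposal is correct and follows essentially the same approach as the paper: reduce the approximate distance product to $O(\log M)$ exact ring products via Zwick's scaling, then tile $\hat A$ and $\hat B$ into $(n/c)\times(n/c)$ blocks with $c=n^{1/\omega(r')}$ so that the rectangular bilinear algorithm produces exactly $n$ block products (one per vertex), with the square-block constraint forcing $r' = 1-(1-r)\omega(r')$ and the routing cost working out to $O(n^{1-2/\omega(r')})$ rounds per product. The paper presents this with an explicit three-step communication pattern for the $r<(1+\alpha)/2$ special case and defers the general case to the same grid-decomposition idea, whereas you phrase the load analysis directly in terms of Lenzen's routing, but these are the same argument.
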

	\begin{proof}
		For a clearer exposition, we will show the details for $r<(1+\alpha)/2$, and sketch the case of  larger $r$.
		
		Recall that all known fast matrix multiplication algorithms are {\em bilinear}. Given a $q^r\times q$ matrix $S$ and a $q\times q$ matrix $T$, if the algorithm multiplies the matrices $S,T$ in time $q^\sigma$, then it first computes $2m=O(q^\sigma)$ linear combinations $S^{(1)},\dots, S^{(m)}, T^{(1)},\dots, T^{(m)}$. That is, for $1\le x\le m$,
		\begin{equation}\label{eq:S}
			S^{(x)} = \sum_{i=1}^{q^r}\sum_{j=1}^q\alpha_{ij}^{(x)}S_{ij},
		\end{equation}
		and
		\[
		T^{(x)} = \sum_{i=1}^q\sum_{j=1}^q\beta_{ij}^{(x)}T_{ij}.
		\]
		Then the result $U=ST$ is given by
		\[
		U_{ij}=\sum_{x=1}^m\gamma_{ij}^{(x)}S^{(x)}\cdot T^{(x)},
		\]
		where the coefficients $\alpha_{ij}^{(x)},\beta_{i,j}^{(x)},\gamma_{i,j}^{(x)}$ are defined by the algorithm being used.
		
		\paragraph{The case $r< (1+\alpha)/2$.}
		In order to multiply the $n^r\times n$ matrix $A$ by the $n\times n$ matrix $B$, we think of $A$ as an $n^{r-1/2}\times n^{1/2}$ matrix over the ring of $n^{1/2}\times n^{1/2}$ matrices, and similarly $B$ is an $n^{1/2}\times n^{1/2}$ matrix over that ring. Since $r < (1+\alpha)/2$, we get $n^{r-1/2}\le (n^{1/2})^\alpha$. So the fast rectangular matrix multiplication algorithm of \cite{GU18} involves $O((n^{1/2})^2)=O(n)$ multiplications of $n^{1/2}\times n^{1/2}$ matrices. For convenience we assume there are at most $n$ of those, and assign each such product $A^{(x)}\cdot B^{(x)}$ to a vertex $x\in V$. (Note that $A^{(x)}$ is a $n^{1/2}\times n^{1/2}$ matrix, which is a linear combination of the $n^{1/2}\times n^{1/2}$ submatrices of $A$, as defined in \eqref{eq:S}.)

		Let $q=\sqrt{n}$ (assume that this is an integer)\footnote{Up to constant factors, integrality issues have no effect on the complexity of our algorithms.}, and index each integer in $[n]$ by a pair in $[q]\times[q]$. For instance, each vertex $x\in V$ will be associated with the pair $(x_1,x_2)\in [q]\times[q]$. Recall that each $x\in V$ is responsible for computing the product of the $q\times q$ matrices $A^{(x)}\cdot B^{(x)}$.
		We assumed that initially each vertex $x$ knows its rows in $A$ and in $B$, i.e., $A_{xi}$ and $B_{xi}$ for all $i$.
		Here $x$ is a number in $[n]$.
		\begin{itemize}
			\item Each $x=(x_1,x_2)\in V$ sends, for each $y_2\in [q]$, the $2q$ elements $A_{x(*,y_2)}$ and $B_{x(*,y_2)}$ to vertex $(x_2,y_2)$. So vertex $y=(y_1,y_2)$ receives the $q\times q$ matrices $A_{(*,y_1)(*,y_2)}$ and $B_{(*,y_1)(*,y_2)}$.\footnote{As was defined in the preliminaries, the notation $A_{x(*,y_2)}$ stands for the set of entries $A_{x(y_1,y_2)}$ for all possible $y_1\in [q]$. Other notation that involves asterisks is defined analogously.}
			
			\item Each vertex $y=(y_1,y_2)\in V$ computes a  linear combination for each $x\in V$,
			\[
			A^{(x)}_y=\sum_{i,j\in [q]}\alpha_{ij}^{(x)}A_{(i,y_1)(j,y_2)},
			\]
			and similarly for $B$. Then $y$ sends the results to $x$. So vertex $x\in V$ receives these $2n$ sums, and can infer $A^{(x)}$ and $B^{(x)}$. (Note that each $y$ sends $x$ a single entry $A^{(x)}_{y_1y_2}$ of the $q\times q$ matrix $A^{(x)}$.)
			
			\item Each $x\in V$ locally computes the $q\times q$ matrix $C^{(x)}=A^{(x)}\cdot B^{(x)}$, and sends the entry $C^{(x)}_{y_1y_2}$ to vertex $y=(y_1,y_2)$. Now each $y\in V$ has the corresponding entry of all the $n$ product matrices
			$\{C^{(x)} \mid x \in V\}$.  So it can compute, for any $i,j\in[q]$, the $(i,y_1),(j,y_2)$ entry of the output matrix $C$ by
			\[
			C_{(i,y_1)(j,y_2)}=\sum_{x\in V}\gamma_{ij}^{(x)}C^{(x)}_{y_1y_2}~.
			\]
			
		\end{itemize}
		
		Now the entries of $C$ can be redistributed among the $n$ vertices, so that every vertex will know all entries of its respective row.

		\paragraph{The general case.} The case when $r\ge (1+\alpha)/2$ is done in a similar manner. We think of $A$ as a $d^{r'}\times d$ matrix over the ring of $n/d\times n/d$ matrices, for a suitable choice of $d=n^{1/\omega(r')}$ and $r'$, and $B$ as a  $d\times d$ matrix over the same ring. Then the algorithm of \cite{GU18} will take $O(d^{\omega(r')})=O(n)$ multiplcations of $n/d\times n/d$ matrices, each will be assigned to a vertex of $V$. Thus, each vertex will need to receive $(n/d)^2$ elements (those in its $n/d\times n/d$ matrix). The number of rounds required for this is $O(n/d^2)$ (using the dissemination and aggregation of linear combinations as in \cite{CKK15}, the simple case of which was described above).
		
		A quick calculation shows that as $A$ has $n^r$ rows, that are divided into blocks of size $n/d \times n/d$, we need $d^{r'}=n^r/(n/d)=d \cdot n^{r-1}$, or $r'\cdot \log d= \log d+(r-1)\log n$, so
		\[
		r' = 1-\frac{(1-r)\log n}{\log d}~.
		\]
		Plugging in the choice of $d=n^{1/\omega(r')}$, we get an equation for $r'$ :
		\begin{equation}\label{eq:r}
			r' = 1-(1-r)\cdot\omega(r')~.
		\end{equation}
		We conclude that $A\cdot B$ can be computed in $O(n^{1-2/\omega(r')})$ rounds, for $r'$ solving \eqref{eq:r}. To compute an ($1+1/R$)-approximation to $A\star B$, simply apply the reduction of \theoremref{lem:star} (computing each product sequentially locally).
		
	\end{proof}

	\subsection{ASP for $S\times V$ in Weighted Graphs}
	Here we apply the improved rectangular MM to ASP for $S\times V$, using the method of \cite{CDKL19}. For completeness we sketch it below.
	The following theorem was shown in \cite{CDKL19}, based on a construction from \cite{EN19hop}. It provides a fast construction of a hopset with logarithmic hopbound for the Congested Clique model.
	\begin{theorem}[\cite{CDKL19}]
		Let $0<\epsilon<1$.
		For any $n$-vertex weighted undirected graph $G = (V,E)$,
		there is a deterministic construction of a $(1+\epsilon,\beta)$-hopset $H$ with
		$\tilde{O}(n^{3/2})$ edges and $\beta=O(\log n/\epsilon)$, that requires $O(\log^2n/\epsilon)$ rounds in the Congested Clique model.
	\end{theorem}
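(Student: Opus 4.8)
This statement combines a hopset construction with a fast deterministic realization of it in the Congested Clique, so the proof I would give is an implementation argument: I take as given (from the underlying two-level, $\kappa=2$ hopset analysis applied across the $O(\log n)$ distance scales) that there is a $(1+\eps,\beta)$-hopset $H$ with $\tO(n^{3/2})$ edges and $\beta=O(\log n/\eps)$, and the real task is to build it in $O(\log^2 n/\eps)$ rounds. Recall the structure. First apply the Klein--Sairam reduction (see the Preliminaries), so that it suffices to handle $O(\log n)$ distance scales $2^i$. At each scale maintain a two-level Thorup--Zwick hierarchy $V=A_0\supseteq A_1\supseteq A_2=\emptyset$: for every $v$ let $p(v)$ be the nearest vertex of $A_1$, let $B(v)=\{u:\,d_G(v,u)<d_G(v,A_1)\}\cup A_1$ (truncated to the scale) be its bunch, and let $C(a)=\{v:\,a\in B(v)\}$ be the cluster of $a\in A_1$; then add to $H$, for every $v$ and every $u\in B(v)$, a weighted edge of weight a $\beta$-bounded approximate distance $\tilde d_G(v,u)$, together with a weighted clique on $A_1$ carrying approximate pairwise distances. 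If $A_1$ intersects the set $N_{\sqrt n}(v)$ of the $\sqrt n$ vertices nearest to $v$ for every $v$, then $|B(v)|=\tO(\sqrt n)$, hence $|H|=\tO(n^{3/2})$.

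The heart of the implementation is a $\beta$-bounded \emph{cluster-confined} Bellman--Ford run simultaneously from all sources in $A_1$, where source $a$ propagates estimates only inside its own cluster $C(a)$. In iteration $t$ each vertex $v$ holds, for every $a$ with $v\in C(a)$, a current estimate of $d_G^{(t)}(v,a)$ and relays it to the neighbours of $v$ lying in $C(a)$. The number of messages $v$ sends in one iteration is $\sum_{a:\,v\in C(a)}|N(v)\cap C(a)|\le |B(v)|\cdot\max_a|C(a)|$, which is $\tO(\sqrt n)\cdot\tO(\sqrt n)=\tO(n)$ as long as both bunches and clusters have size $\tO(\sqrt n)$; symmetrically $v$ receives $\tO(n)$ messages. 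Hence by Lenzen's routing theorem one iteration costs $O(1)$ rounds, one scale costs $O(\beta)=O(\log n/\eps)$ rounds (the clique on $A_1$ is computed from $\tO(\sqrt n)$ further sources confined by the same bound, within the same budget), and the $O(\log n)$ scales cost $O(\log^2 n/\eps)$ rounds altogether. That the resulting $H$ is a genuine $(1+\eps,\beta)$-hopset is the correctness guarantee of the underlying construction \cite{EN19hop}, which I would invoke rather than re-prove.

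To make the construction deterministic, replace the random sampling of $A_1$ by a deterministic set of size $\tO(\sqrt n)$ chosen so that \emph{simultaneously} (a) $A_1\cap N_{\sqrt n}(v)\neq\emptyset$ for every $v$ (which keeps $|B(v)|=\tO(\sqrt n)$) and (b) every cluster $C(a)$ has size $\tO(\sqrt n)$. Such a set exists and is produced by the standard greedy / parallel set-cover derandomization, which is simulated in the Congested Clique within $\poly(\log n)$ rounds as in \cite{CDKL19}, each greedy step needing only counting and hence $O(\log n)$-bit messages. I expect this to be the main obstacle: the hitting-set requirement (a) is routine, but ensuring (b) --- that \emph{no} cluster grows past $\tO(\sqrt n)$, which is exactly what turns the naive per-vertex load of $\tO(n^{3/2})$ into the $\tO(n)$ that Lenzen's routing can absorb --- forces one to run the derandomization against the cluster-size potential as well and to check that the corresponding greedy choices are computable with short messages inside the round budget.
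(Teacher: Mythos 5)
This theorem is imported into the paper as a black box. It is stated with the attribution to \cite{CDKL19}, with the surrounding sentence ``The following theorem was shown in \cite{CDKL19}, based on a construction from \cite{EN19hop}'', and no proof of it appears anywhere in the paper. So there is no internal argument against which to compare your attempt; the paper's ``proof'' is the citation. Your reconstruction does capture the right general ingredients for a result of this shape: a two-level Thorup--Zwick-type structure with $|A_1|\approx\sqrt n$, hop-bounded Bellman--Ford from centers, Lenzen's routing theorem to compress each iteration to $O(1)$ rounds, and a deterministic hitting set in place of random sampling.

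However, as a standalone argument it has a real hole exactly where you flag the difficulty, and one step earlier than you think. With $B(v)=\{u:d_G(v,u)<d_G(v,A_1)\}\cup A_1$, every $a\in A_1$ belongs to $B(v)$ for \emph{every} $v$, so $C(a)=\{v:a\in B(v)\}$ is all of $V$ (or, with scale-truncation, a ball of radius $\Theta(2^i)$, which can still contain $\Theta(n)$ vertices). The load calculation ``$|B(v)|\cdot\max_a|C(a)|=\tilde O(\sqrt n)\cdot\tilde O(\sqrt n)$'' is therefore false as stated, and Lenzen's theorem does not apply directly. To make this kind of argument go through one must separate the two edge types in the hopset: bunch edges $(v,u)$ for $u\in B_0(v)=\{u:d(v,u)<d(v,A_1)\}$, whose relevant reverse sets $C_0(u)$ need an explicit cluster-size bound (not just a hitting-set bound; $\sum_u|C_0(u)|=\sum_v|B_0(v)|=\tilde O(n^{3/2})$ only controls the \emph{average}), and pivot/$A_1$-clique edges, for which the relevant per-center set is the Voronoi cell $\{v:p(v)=a\}$ rather than the TZ cluster of $a$. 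A correctness argument is also needed that the hop-bounded search confined to these sets actually recovers the required $\beta$-bounded distances. Finally, your parenthetical explanation of the hopbound is off: per-scale hopsets unioned over the Klein--Sairam scales do not accumulate hopbounds, so ``applied across the $O(\log n)$ scales'' is not where the $\log n$ factor in $\beta=O(\log n/\eps)$ comes from; it is an artifact of the CDKL19 construction itself.
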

	Now, we can approximately compute $\beta$-bounded distances in the graph $G\cup H$, by letting $B$ be the adjacency matrix of $G\cup H$, and $A^{(1)}$ the $|S|\times n$ matrix of sources.
	(Specifically, for every pair $(u,v) \in S \times V$, the entry $A_{u,v}^{(1)}$ contains $\omega((u,v))$ if $(u,v) \in E$, and $\infty$ otherwise.)
	Define $A^{(t+1)}=A^{(t)}\star B$, and by \claimref{claim:star} and the definition of hopset, $A^{(\beta)}_{ij}$ is a $(1+\epsilon)\cdot(1+1/R)$-approximation to $d_G(i,j)$ for any $i\in S$ and $j\in V$. Each product is computed by \theoremref{thm:CCRMM} within $\tilde{O}(R\cdot n^{1-2/\omega(r')}\cdot\log M)$ rounds. Taking $R=\lceil 1/\epsilon\rceil$ and recalling that $\beta=O(\log n/\epsilon)$, yields the following.
	
	\begin{theorem}\label{thm:CC-w}
		Given any $n$-vertex weighted undirected graph $G=(V,E)$ with polynomial weights, parameters $0<r<1$, $0<\epsilon<1$, and a set $S\subseteq V$ of $n^r$ sources, let $r'$ be the solution to \eqref{eq:r}. Then there is a deterministic algorithm in the Congested Clique that computes ($1+\epsilon$)-ASP for $S\times V$ within $\tilde{O}(n^{1-2/\omega(r')}/\epsilon)$ rounds.
	\end{theorem}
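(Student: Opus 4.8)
The plan is to follow the hopset-based reduction of \cite{CDKL19}, but with the exact sparse distance products used there replaced by the fast rectangular distance product of \theoremref{thm:CCRMM}. First I would invoke the $(1+\epsilon,\beta)$-hopset construction for the Congested Clique quoted above: in $O(\log^2 n/\epsilon)$ rounds it produces a hopset $H$ with $\tilde{O}(n^{3/2})$ edges and hopbound $\beta=O(\log n/\epsilon)$. A routing step (standard in the Congested Clique, since every vertex is incident to at most $n-1$ hopset edges) then lets each vertex learn its own row of the adjacency matrix $B$ of $G\cup H$; I take $B$ with non-edge entries set to $\infty$ and diagonal $0$, and with the finite positive weights in $\{1,\dots,M\}$, $M=\poly(n)$, as accommodated by \theoremref{lem:star}. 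Let $A^{(1)}:=B_{S*}$ be the $n^r\times n$ submatrix consisting of the rows indexed by $S$.

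Next I would iterate $\beta-1$ approximate distance products: for $t=1,\dots,\beta-1$, let $A^{(t+1)}$ be a $(1+1/R)$-approximation to $A^{(t)}\star B$, computed via \theoremref{thm:CCRMM} in $\tilde{O}(R\cdot n^{1-2/\omega(r')}\cdot\log M)$ rounds, where $r'$ solves \eqref{eq:r}. By \claimref{claim:star} applied inductively, together with the fact that $A^{(1)}=B_{S*}$ equals the $1$-bounded distance matrix of $G\cup H$ restricted to the source rows, $A^{(t)}_{ij}$ is a $(1+1/R)^{t-1}$-approximation to $d^{(t)}_{G\cup H}(i,j)$ for every $i\in S$, $j\in V$; in particular $A^{(\beta)}_{ij}$ is a $(1+1/R)^{\beta-1}$-approximation to $d^{(\beta)}_{G\cup H}(i,j)$.

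To conclude, I would pick $R=\Theta(\beta/\epsilon)$ so that $(1+1/R)^{\beta-1}\le e^{\epsilon}=1+O(\epsilon)$; combining this with the hopset inequality $d_G(i,j)\le d^{(\beta)}_{G\cup H}(i,j)\le(1+\epsilon)d_G(i,j)$ shows that $A^{(\beta)}_{ij}$ is a $(1+O(\epsilon))$-approximation to $d_G(i,j)$ for all $i\in S$, $j\in V$, after which one rescales $\epsilon$ by a constant. The round count is the hopset cost $O(\log^2 n/\epsilon)$ plus $\beta-1$ products of $\tilde{O}(R\cdot n^{1-2/\omega(r')}\cdot\log M)$ rounds each, i.e. $\tilde{O}(\beta R\cdot n^{1-2/\omega(r')})$, which is $\tilde{O}(n^{1-2/\omega(r')}/\epsilon)$ once the $\polylog n$ (and lower-order $\poly(1/\epsilon)$) factors are absorbed. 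Finally, to report paths I would, exactly as in \sectionref{sec:MSP}, compute a witness matrix alongside each approximate distance product, trace back a path in $G\cup H$, and replace every hopset edge on it by its stored underlying path in $G$.

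The only genuinely new ingredient here is \theoremref{thm:CCRMM} itself, which is already in hand, so the remainder is routine; the one spot I would watch is the accumulation of multiplicative error over the $\beta$ iterated approximate products — choosing $R=\Theta(1/\epsilon)$ rather than $\Theta(\beta/\epsilon)$ would inflate the error to $(1+\epsilon)^{\Theta(\beta)}=n^{\Omega(1)}$, so $R$ must scale with $\beta$, at the modest cost of an extra $\polylog n\cdot\poly(1/\epsilon)$ factor in the number of rounds.
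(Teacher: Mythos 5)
Your proof is correct and follows essentially the same route the paper takes: build the Congested-Clique hopset of \cite{CDKL19} with $\beta=O(\log n/\epsilon)$, then iterate $\beta-1$ approximate distance products via \theoremref{thm:CCRMM}, invoke \claimref{claim:star} inductively, and finish with the hopset guarantee. One point in your favor: your choice $R=\Theta(\beta/\epsilon)$, matching Algorithm~\ref{alg:MSP} in \sectionref{sec:MSP}, is in fact the \emph{right} one. The paper's in-text proof of \theoremref{thm:CC-w} states ``Taking $R=\lceil 1/\epsilon\rceil$'' and then asserts that $A^{(\beta)}$ is a $(1+\epsilon)\cdot(1+1/R)$-approximation, but \claimref{claim:star} applied $\beta-1$ times actually yields a $(1+\epsilon)\cdot(1+1/R)^{\beta-1}$ factor; with $R=1/\epsilon$ and $\beta=\Theta(\log n/\epsilon)$ this blows up to $n^{\Theta(1)}$, exactly the failure mode you flag at the end. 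So your scaling of $R$ with $\beta$ is necessary, and you are right that it costs an extra $\polylog n\cdot\poly(1/\epsilon)$ factor; the resulting $1/\epsilon$-dependence is somewhat worse than the bare ``$/\epsilon$'' in the theorem statement, but that is a defect of the paper's bookkeeping rather than of your argument. Everything else — distributing rows of $B$, the witness-based path reporting, and expanding hopset edges back into $G$-paths — matches the paper's treatment.
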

	
	In particular, for a constant $\epsilon > 0$, when $r<(1+\alpha)/2\approx 0.655$ the running time is $\tilde{O}(1)$. For $r=0.7$, the solution is slightly smaller than $r'=0.4$, for which $\omega(r')\approx 2.01$, so $d=n^{0.497}$, and the number of rounds is $O(n/d^2)\approx O(n^{0.006})$.
	When $r=0.8$, the solution is roughly $r'=0.59$, for which $\omega(r')\approx 2.085$, so $d=n^{0.48}$, and the number of rounds is $O(n/d^2)\approx O(n^{0.04})$. We show a few more values in the following \tableref{table:asp-CC}. (Note that at $r=1$ we converge to the result of \cite{CKK15} for APASP.)
	

	\subsection{ASP for $S\times V$ in Unweighted Graphs}

	In this section we show an improved algorithm for unweighted graphs, based on \cite{DP20}. They first devised a fast algorithm for a sparse emulator: we say that $H=(V,F)$ is an {\em $(\alpha,\beta)$-emulator} for a graph $G=(V,E)$ if for all $u,v\in V$, $d_G(u,v)\le d_H(u,v)\le \alpha\cdot d_G(u,v)+\beta$.
	\begin{lemma}[\cite{DP20}]
		For any $n$-vertex unweighted graph $G =(V, E)$ and $0 <\epsilon < 1$, there is a randomized algorithm in the Congested Clique model that computes $(1+\epsilon,\beta)$-emulator $H$ with $O(n\log\log n)$ edges within $O(\log^2\beta/\epsilon)$ rounds w.h.p., where $\beta = O(\log\log n/\epsilon)^{\log\log n}$.
	\end{lemma}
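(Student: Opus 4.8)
\medskip
\noindent\textbf{Proof plan.}
This is the emulator construction of \cite{DP20}, and I would reconstruct it in two stages: first recall the (by now standard) near-additive emulator framework, and then make it fast in the Congested Clique. For the framework, following Thorup--Zwick / Pettie / Elkin--Neiman, I would sample a hierarchy $V=A_0\supseteq A_1\supseteq\dots\supseteq A_\ell$ with $\ell=\Theta(\log\log n)$ levels, each vertex of $A_i$ going into $A_{i+1}$ independently with a geometrically shrinking probability chosen so that $\sum_i |A_i| = O(n)$ w.h.p. For every vertex $v$ and level $i$, let $p_i(v)$ be a closest vertex of $A_i$; add to $H$ the edge $(v,p_i(v))$ of weight $d_G(v,p_i(v))$, together with the edges from $v$ to all vertices of $A_i$ that are closer to $v$ than $p_{i+1}(v)$ (the level-$i$ ``bunch'' of $v$). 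A Chernoff bound on the sampling shows each bunch has $\tO(1)$ vertices w.h.p., so $|F| = O(n\log\log n)$; for the sharper size bound one also applies the usual recursive sparsification (building the emulator on top of a coarser one), which does not change the stretch asymptotically.

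The stretch bound is the standard near-additive argument, which I would spell out: walk along a shortest $u$--$v$ path in $G$ and, whenever the current endpoint's bunch does not already reach far enough, ``promote'' to its pivot one level up, paying a multiplicative $(1+\epsilon)$ overhead; the recursion bottoms out after $\ell$ levels and yields $d_G(u,v)\le d_H(u,v)\le (1+\epsilon)d_G(u,v)+\beta$ with $\beta=(O(\log\log n/\epsilon))^{\log\log n}$. The real content is the Congested Clique implementation within $O(\log^2\beta/\epsilon)$ rounds. The bottleneck is computing the pivots $p_i(v)$ and the bunches: done directly this needs BFS explorations to radius $\Theta(\beta)$, far beyond a $\poly(\log\log n)$ budget. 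The fix is to first build, in $\poly(\log\log n)$ rounds, a $(1+\epsilon_0,\beta_0)$-hopset $H_0$ with $\beta_0=\poly(\log\log n)$ and $\tO(n^{3/2})$ edges (the Congested Clique hopset cited earlier, applied with a $\poly(\log\log n)$ target hopbound, or its iterated variant), so that every radius-$\rho$ exploration in $G$ becomes a $\beta_0$-hop bounded exploration in $G\cup H_0$, implementable in $O(\beta_0)$ communication rounds. The key observation is that the messages such an exploration generates --- the growing BFS forests rooted at sampled centers, and the bunch-membership updates --- are \emph{sparse}: by the Chernoff bounds each vertex participates in only $\tO(1)$ of them, so each round moves $O(n\cdot\polylog n)$ messages in total, which Lenzen's routing delivers in $O(\polylog n)$ rounds. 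Iterating over the $\ell=\Theta(\log\log n)$ levels, with an $O(\log\beta)$-style number of doubling phases of $\beta_0$-hop exploration per level, totals $O(\log^2\beta/\epsilon)$ rounds.

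I expect the main obstacle to be exactly this round-complexity accounting: certifying simultaneously that $H_0$ is itself constructible within the budget, and that at \emph{every} step the cluster/bunch information to be disseminated fits the $\Theta(n)$-per-vertex bandwidth of the Congested Clique. Both reduce to Chernoff bounds over the random samplings --- which is why the guarantee is only w.h.p. A secondary, routine point is that the stretch analysis must be carried out against $G\cup H_0$ rather than $G$, so the hopset error $(1+\epsilon_0)$ composes with the emulator error; taking $\epsilon_0=\Theta(\epsilon)$ and rescaling absorbs this into the final $(1+\epsilon)$ factor.
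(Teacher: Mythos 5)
This lemma is not proven in the paper at all: it is cited verbatim from \cite{DP20} and used as a black box (the bracketed citation after \texttt{\textbackslash begin\{lemma\}} marks it as an imported result). So there is no ``paper's own proof'' to compare your reconstruction against; the most I can do is sanity-check your sketch on its own terms.

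Two genuine concerns with the plan as written. First, there is a circularity. You propose to ``first build, in $\poly(\log\log n)$ rounds, a $(1+\epsilon_0,\beta_0)$-hopset $H_0$ with $\beta_0=\poly(\log\log n)$ \ldots (the Congested Clique hopset cited earlier, applied with a $\poly(\log\log n)$ target hopbound).'' But the Congested Clique hopset the paper cites (from \cite{CDKL19}) has hopbound $\beta=O(\log n/\epsilon)$ and takes $O(\log^2 n/\epsilon)$ rounds --- both are $\poly(\log n)$, not $\poly(\log\log n)$. Getting a hopset with $\poly(\log\log n)$ hopbound in $\poly(\log\log n)$ rounds is essentially the same hard problem this emulator is meant to solve; there is no already-available tool that lets you do it as a preprocessing step. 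What \cite{DP20} actually do is bootstrap: the emulator edges built through level $i$ (which span radius $\rho_i$) are what make the level-$(i+1)$ exploration to radius $\rho_{i+1}=\Theta(\rho_i\cdot\log\log n/\epsilon)$ cheap --- only $O(\log\log n/\epsilon)$ hops in $G$ together with the already-constructed emulator edges. The bounded-distance hopset in the other cited theorem is then derived \emph{from} the emulator, not the other way around. Without this bootstrapping, the round budget does not close.

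Second, a smaller accounting slip: you say each exploration hop moves $O(n\cdot\polylog n)$ messages, ``which Lenzen's routing delivers in $O(\polylog n)$ rounds.'' If each hop really cost $\Theta(\polylog n)$ rounds, multiplying by $\ell\cdot\beta_0 = \poly(\log\log n)$ hops would give $\polylog n$ total rounds, missing the $O(\log^2\beta/\epsilon)=\poly(\log\log n)$ target. The point is sharper than you state: when each vertex sends and receives only $\tilde{O}(1)$ messages per hop (which the Chernoff-based sparsity of bunches gives you), Lenzen routing delivers a hop in $O(1)$ rounds, and then the total is indeed $\poly(\log\log n)$. The rest of your outline (hierarchical sampling with $\Theta(\log\log n)$ levels, pivot/bunch edges, the near-additive stretch argument giving $\beta=(O(\log\log n/\epsilon))^{\log\log n}$) is the right shape, modulo that \cite{DP20} actually use a superclustering-and-interconnection style hierarchy rather than raw Thorup--Zwick bunches; the parameters come out the same.
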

	Since the emulator is so sparse, all vertices can learn all of its edges within $O(\log\log n)$ rounds. Thus every pair of distance larger than $\beta/\epsilon$ already has an $1+O(\epsilon)$ approximation, just by computing all distances in $H$ locally. It remains to handle distances at most  $\beta/\epsilon$.
	
	The next tool is a bounded-distance hopset that "takes care" of small distances. We say that $H'=(V,E')$ is a $(1+\epsilon,\beta',t)$-hopset if for every pair $u,v\in V$ with $d_G(u,v)\le t$ we have the guarantee of \eqref{eq:hopsetdef}.
	\begin{theorem}[\cite{DP20}]
		There is a randomized construction of a $(1+\epsilon,\beta',t)$-hopset $H'$ with $O(n^{3/2}\log n)$ edges and
		$\beta'=O(\log t/\epsilon)$ that takes $O(\log^2t/\epsilon)$ rounds w.h.p. in the Congested Clique model.
	\end{theorem}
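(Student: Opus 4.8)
The plan is to prove this as the \emph{bounded-distance} analogue of the standard $(1+\eps,\beta)$-hopset constructions of \cite{C00,EN19hop,CDKL19}: the content of the statement is that if one only needs to shortcut pairs at distance at most $t$, then the number of distance scales, the hopbound, and hence the round complexity can all be made to depend on $\log t$ rather than on $\log n$. First I would run a scale-based construction in which the number of scales is $\ell:=\lceil\log t\rceil+1$ instead of $\lceil\log n\rceil+1$. Fix one random sample $A\subseteq V$, including each vertex independently with probability $\Theta(1/\sqrt n)$, so $|A|=\tilde O(\sqrt n)$, and build $H'$ incrementally, scale by scale; write $H'_{<i}$ for the edges added at scales $0,\dots,i-1$ and $G_i:=G\cup H'_{<i}$. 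Set the per-scale accuracy to $\eps':=\eps/(c\ell)$ and the per-scale hopbound to $\beta':=O(1/\eps')=O(\log t/\eps)$.

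Then, at scale $i$ (distance threshold $\delta_i:=\Theta(2^i)$), I would perform two multi-source explorations in $G_i$, each truncated to $\beta'$ hops and to radius $\delta_i$: (i) from every pivot $a\in A$, recording for each discovered vertex $v$ a hopset edge $(a,v)$ weighted by the discovered distance; and (ii) from every vertex $v$, recording its \emph{bunch} --- the vertices $u$ with $d_{G_i}(v,u)$ strictly below $v$'s distance to $A$, again capped at radius $\delta_i$ --- and adding the corresponding weighted edges. The key point is that $\beta'$ hops in $G_i$ already reach radius $\delta_i=\Theta(2^i)$: by induction, $H'_{<i}$ is a $(1+\eps')$-accurate hopset with hopbound $\beta'$ for all pairs at distance $\le 2^{i-1}$, so concatenating two scale-$(i-1)$ shortcuts gives an $O(\beta')$-hop, $(1+O(\eps'))$-accurate representation of any length-$2^i$ subpath (this can be checked exactly as in \claimref{claim:star}). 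For the size bound, the bunches have size $\tilde O(\sqrt n)$ each w.h.p.\ by the usual Thorup--Zwick argument for a $\Theta(1/\sqrt n)$-density sample, contributing $\tilde O(n^{3/2})$ in total, while the pivot balls contribute at most $|A|\cdot n=\tilde O(n^{3/2})$; summing over the $\ell=O(\log n)$ scales gives $O(n^{3/2}\log n)$ edges.

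Next I would verify that $H'$ is a $(1+\eps,\beta',t)$-hopset via the telescoping argument of \cite{EN19hop}: given $u,v$ with $d:=d_G(u,v)\le t$, pick the scale $i$ with $2^i=\Theta(\eps' d)$; the standard analysis exhibits a $u$--$v$ path in $G\cup H'$ that uses $O(1/\eps')=\beta'$ scale-$i$ hopset edges and has length at most $(1+O(\eps'))^{i}\cdot(1+O(\eps'))\cdot d$, where the $(1+O(\eps'))^{i}$ factor accounts for the construction error compounding over the $i\le\ell$ levels of the recursion; since $\eps'=\eps/(c\ell)$ this product is $1+O(\eps)$, which matches \eqref{eq:hopsetdef} after rescaling $\eps$. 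For the round complexity, the construction runs $\ell=O(\log t)$ scales, each dominated by the two $\beta'$-bounded-hop multi-source explorations, and these are implemented in the Congested Clique as in \cite{CKK15,CDKL19}: the $\tilde O(\sqrt n)$ sources propagate their $\beta'$-bounded distance labels for $\beta'$ rounds, and --- after first replacing dense portions of the input by a sparse skeleton so that the per-round message load stays $\tilde O(n)$ per vertex --- Lenzen's routing delivers one propagation round in $\tilde O(1)$ rounds. This gives $O(\beta')=O(\log t/\eps)$ rounds per scale and $O(\log^2 t/\eps)$ rounds overall.

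The analytically clean part is the scale/sampling bookkeeping and the telescoped stretch estimate; the hard part will be the Congested Clique implementation. One must argue that the $\tilde O(\sqrt n)$-source, $\beta'$-hop explorations run at each scale can be carried out within the $O(n)$-per-vertex-per-round bandwidth budget, which forces a sparsification step that must neither spoil the $(1+\eps)$ guarantee nor interfere with the incrementally built lower scales $H'_{<i}$ that the scale-$i$ exploration relies on. Getting this interaction right --- together with making sure the recursion depth $\ell=O(\log t)$ in the construction error is exactly what is compensated by the choice $\eps'=\eps/\Theta(\ell)$ --- is where the real work lies.
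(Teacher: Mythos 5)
This is a \emph{cited} result from \cite{DP20}, not a theorem the paper proves: it is stated as a black box and immediately applied to obtain the $\poly(\log\log n)$-round unweighted-graph result. There is therefore no proof in the paper to compare your sketch against, and verifying your reconstruction would require consulting \cite{DP20} directly.

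That said, a few remarks on the sketch on its own terms. The high-level plan---run the standard scale-based hopset machinery of \cite{C00,EN19hop,CDKL19} but with $\ell=O(\log t)$ scales instead of $O(\log n)$, set the per-scale accuracy to $\eps'=\Theta(\eps/\ell)$ so the compounded error stays $1+O(\eps)$, and let the single $\Theta(1/\sqrt n)$-density sample drive both the size bound $\tilde O(n^{3/2})$ (Thorup--Zwick bunches plus pivot balls) and the hopbound $\beta'=O(1/\eps')=O(\log t/\eps)$---is the right shape and matches the way \cite{CDKL19} already obtain $\beta=O(\log n/\eps)$ hopsets of size $\tilde O(n^{3/2})$ in $O(\log^2 n/\eps)$ Congested Clique rounds. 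The places you would need to pin down against \cite{DP20} are exactly the ones you flag yourself: (a) the inductive claim that $\beta'$ hops in $G_i=G\cup H'_{<i}$ already reach radius $\Theta(2^i)$ has to be stated carefully so that the truncation of explorations to $\beta'$ hops does not lose vertices that a bunch or pivot ball ought to contain; (b) the bandwidth argument in the Clique requires that each scale's $\tilde O(\sqrt n)$-source Bellman--Ford stays within $\tilde O(n)$ messages per vertex per round, which in \cite{CDKL19,DP20} is handled by sparsifying the working graph before the exploration, and one must check the sparsifier does not perturb the $(1+\eps')$-accuracy that the scale-$(i{+}1)$ induction relies on; and (c) the theorem's hopbound $\beta'=O(\log t/\eps)$ is the \emph{final} hopbound, not merely the per-scale one, so the telescoping argument must show that the covering path for any pair at distance $\le t$ uses $O(\log t/\eps)$ hops in total, not $O(\log t/\eps)$ per scale summed over $\ell$ scales. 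None of these are likely to be fatal, but they are exactly where the details of \cite{DP20} do the work that a one-page sketch cannot.
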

	We take a $(1+\epsilon,\beta',t)$-hopset $H'$ for $G$ with $t= \beta/\epsilon=O(\log\log n/\epsilon)^{\log\log n}$, so that $\beta'=\poly(\log\log n/\epsilon)$.
	As before we let $B$ be the adjacency matrix of $G\cup H'$, and $A^{(1)}$ be the $|S|\times n$ matrix of sources. Define $A^{(s+1)}=A^{(s)}\star B$. By \claimref{claim:star} and the definition of bounded-distance hopset, $A^{(\beta')}_{ij}$ is a $(1+\epsilon)\cdot(1+1/R)$-approximation to $d_G(i,j)$ for any $i\in S$ and $j\in V$ with $d_G(i,j)\le t$. Each distance product is computed by \theoremref{thm:CCRMM} within $\tilde{O}(R\cdot n^{1-2/\omega(r')}\cdot\log M)$ rounds.
	We note that since $G$ is unweighted, the maximal entry in $B$ and any $A^{(s)}$ is $t\cdot (1+\epsilon)$ (we can simply ignore entries of larger weight, i.e. replacing them by $\infty$, since they will not be useful for approximating distances at most $t$). So we have $\log M=\poly(\log\log n)$.
	In the current setting we assume $r \le {{1 + \alpha} \over 2} \approx 0.655$, and so $\omega(r')=2$. We conclude with the following theorem.
	\begin{theorem}
		Given any $n$-vertex unweighted undirected graph $G=(V,E)$, any $0<\epsilon<1$, and a set $S\subseteq V$ of at most $O(n^{0.655\ldots})$ sources, there is a randomized algorithm in the Congested Clique that w.h.p. computes ($1+\epsilon$)-ASP for $S\times V$ within $\poly(\log\log n/\epsilon)$ rounds.
	\end{theorem}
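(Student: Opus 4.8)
The plan is to follow the two-scale approach of \cite{DP20}, handling large and small distances by separate mechanisms. First I would build the $(1+\epsilon,\beta)$-emulator $H$ with $O(n\log\log n)$ edges and $\beta = O(\log\log n/\epsilon)^{\log\log n}$, which by the emulator lemma of \cite{DP20} takes $O(\log^2\beta/\epsilon)$ rounds w.h.p.; because $H$ is so sparse, in $O(\log\log n)$ additional rounds every vertex can collect all of $H$ (standard Congested Clique routing) and then locally compute all pairwise distances in $H$. Every pair $(u,v)$ with $d_G(u,v) > \beta/\epsilon$ is then already served to within a $(1+O(\epsilon))$ factor, so it remains to serve the pairs with $d_G(u,v) \le t := \beta/\epsilon = O(\log\log n/\epsilon)^{\log\log n}$.

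Second, I would take the $(1+\epsilon,\beta',t)$-bounded-distance hopset $H'$ of \cite{DP20}, which has $O(n^{3/2}\log n)$ edges and hopbound $\beta' = O(\log t/\epsilon) = \poly(\log\log n/\epsilon)$, built in $O(\log^2 t/\epsilon)$ rounds w.h.p. Let $B$ be the adjacency matrix of $G\cup H'$ and $A^{(1)}$ the $|S|\times n$ source matrix ($A^{(1)}_{uv}=w(u,v)$ if $(u,v)\in E$, else $\infty$), and iterate $A^{(s+1)} = A^{(s)}\star B$ for $s=1,\dots,\beta'-1$, each distance product computed via \theoremref{thm:CCRMM} with $R=\lceil 1/\epsilon\rceil$. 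By \claimref{claim:star}, $A^{(\beta')}_{ij}$ is a $(1+1/R)^{\beta'-1}\le e^\epsilon$ approximation of $d_{G\cup H'}^{(\beta')}(i,j)$, and by the bounded-distance hopset guarantee this is a $(1+O(\epsilon))$-approximation of $d_G(i,j)$ whenever $d_G(i,j)\le t$. Since $G$ is unweighted I would truncate every matrix entry exceeding $t(1+\epsilon)$ to $\infty$ (such values can never certify a distance $\le t$), so the largest finite entry $M$ obeys $\log M = \poly(\log\log n/\epsilon)$.

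Finally, for the round count the crucial observation is that in the assumed regime $|S| = O(n^{(1+\alpha)/2})$ the solution $r'$ of \eqref{eq:r} satisfies $\omega(r') = 2$ — concretely $r' = 2r-1 \le \alpha$ — so $n^{1-2/\omega(r')} = n^0 = 1$ and each invocation of \theoremref{thm:CCRMM} costs only $\tilde O(R\cdot n^{1-2/\omega(r')}\cdot\log M) = \poly(\log\log n/\epsilon)$ rounds; multiplying by the $\beta'-1 = \poly(\log\log n/\epsilon)$ iterations and adding the emulator and hopset construction costs leaves the total at $\poly(\log\log n/\epsilon)$. The output is obtained by taking, pair by pair, the minimum of the locally-computed emulator distance and the corresponding entry of $A^{(\beta')}$, and rescaling $\epsilon$ by a constant to get a clean $(1+\epsilon)$ factor. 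I expect the only delicate point to be the clean handoff between the two scales — checking that $H$ covers exactly the pairs with $d_G(u,v) > t$, that the hopset iteration covers exactly the pairs with $d_G(u,v)\le t$, and that the truncation at $t(1+\epsilon)$ is harmless for the latter — while everything else is routine bookkeeping on top of \theoremref{thm:CCRMM}.
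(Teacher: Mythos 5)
Your proposal matches the paper's proof essentially step for step: the $(1+\epsilon,\beta)$-emulator of \cite{DP20} handles distances above $t=\beta/\epsilon$ by local computation, the $(1+\epsilon,\beta',t)$-bounded hopset plus iterated approximate distance products via \theoremref{thm:CCRMM} and \claimref{claim:star} handle distances at most $t$, and the truncation at $t(1+\epsilon)$ keeps $\log M = \poly(\log\log n/\epsilon)$. Your explicit derivation $r'=2r-1\le\alpha$ (hence $\omega(r')=2$) is a slightly more spelled-out version of what the paper asserts, but the argument and all the ingredients are the same.
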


	\section{Acknowledgements}
	
	We are grateful to Fran\c{c}ois Le Gall for explaining us certain aspects of the algorithm of \cite{GU18}.
	
	\bibliographystyle{alpha}
	\bibliography{hopset}
	
	\appendix

	\section{PRAM Approximate Multi-Source Shortest Paths}\label{sec:pram}
	
	The algorithm of \sectionref{sec:MSP} can be translated to the PRAM model. In this model, multiple processors are connected to a single memory block, and the operations are performed in parallel by these processors in synchronous rounds. The {\em running time} is measured by the number of rounds, and the {\em work} by the number of processors multiplied by the number of rounds.
	
	To adapt our algorithm to this model, we need to show that approximate distance products can be computed efficiently in PRAM. The second ingredient is a parallel algorithm for hopsets. For the latter, the following theorem was shown in \cite{EN19hop}.
	A deterministic analogue of it was recently shown in \cite{EM20}.
	
	\begin{theorem}[\cite{EN19hop}]\label{thm:hopset-pram}
		For any weighted undirected graph $G = (V, E)$ on $n$ vertices and parameters $\kappa\ge 1$ and $0<\epsilon<1$,	there is a randomized algorithm that runs in parallel time  $\left(\frac{\log n}{\epsilon}\right)^{O(\kappa)}$ and work $\tilde{O}(|E|\cdot n^{1/\kappa})$, that computes an $(\epsilon,\beta)$-hopset with $O(n^{1+1/\kappa}\cdot\log^*n)$ edges where $\beta = \left(\frac{\kappa}{\epsilon}\right)^{O(\kappa)}$.
	\end{theorem}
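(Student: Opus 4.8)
The plan is to build the hopset by the standard \emph{superclustering-and-interconnection} scheme (as used for hopsets in the sequential and distributed settings, e.g.\ in the constructions underlying \cite{EN19hop,HKN16}), and to observe that every primitive this scheme needs is an \emph{$h$-bounded exploration} --- a truncated (multi-source) Bellman--Ford run for $h$ rounds --- which parallelizes trivially in PRAM. Before that, I would apply the Klein--Sairam-type reduction recalled in the Preliminaries: it suffices to build an $(\epsilon,\beta)$-hopset for each of $\tilde O(|E|)$ subinstances of aspect ratio $O(n/\epsilon)$ and total size $\tilde O(|E|)$, and to take the union of the results; hence we may assume the maximum finite distance is $\Lambda = \poly(n)$, so there are only $O(\log n)$ relevant distance scales $\delta_i = 2^i$.

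For the construction itself, first draw a sampled hierarchy $V = A_0 \supseteq A_1 \supseteq \cdots \supseteq A_\ell = \emptyset$ with $\ell = O(\kappa)$ levels, where $A_{j+1}$ keeps each vertex of $A_j$ independently with a probability chosen so that $|A_j|$ decays geometrically and, for every $v$ and $j$, the set of level-$j$ centers closer to $v$ than the nearest level-$(j+1)$ center (its ``bunch'') has expected size $\tilde O(n^{1/\kappa})$. Then, for each scale $i$ from small to large, and within scale $i$ for each level $j$ in increasing order, run a \emph{superclustering} step followed by an \emph{interconnection} step: in superclustering, grow $\beta_j$-bounded balls of radius $\Theta(\delta_i)$ in $G$ augmented with the hopset edges already inserted at smaller scales, and for every level-$j$ center that meets a sampled level-$(j+1)$ center add a hopset edge to it of weight equal to the computed $\beta_j$-bounded approximate distance; in interconnection, every level-$j$ center that was \emph{not} absorbed connects, again via a $\beta_j$-bounded ball of radius $\Theta(\delta_i)$, to all level-$j$ centers it reaches. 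Each such step is, in PRAM, a $\beta_j$-round truncated Bellman--Ford from a designated set of sources; by the hierarchy's sampling properties, at most $\tilde O(n^{1/\kappa})$ sources are active at any fixed vertex w.h.p.\ (Chernoff), so a single step costs $\beta_j$ parallel rounds and $\tilde O(|E|\cdot n^{1/\kappa})$ work.

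Accounting then follows. Across all $O(\log n)\cdot\ell = \polylog(n)$ superclustering/interconnection steps the total work is $\tilde O(|E|\cdot n^{1/\kappa})$, and each center contributes $\tilde O(n^{1/\kappa})$ hopset edges per level, for $O(n^{1+1/\kappa}\log^* n)$ edges overall, the $\log^* n$ factor coming from the refinement of the phase structure that keeps the per-level hop bound polynomial in $\kappa/\epsilon$. For the parallel time, the hop bound $\beta_j$ used at level $j$ must be large enough that a $\beta_j$-bounded ball in $G$ plus the smaller-scale hopset faithfully $(1+\epsilon/\polylog(n))$-captures distances up to $\Theta(\delta_i)$; unrolling this recursion over the $\ell = O(\kappa)$ levels gives $\beta = \beta_\ell = (\kappa/\epsilon)^{O(\kappa)}$, and the running time is the number of steps times the largest hop bound, i.e.\ $\polylog(n)\cdot(\kappa/\epsilon)^{O(\kappa)} = (\log n/\epsilon)^{O(\kappa)}$.

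The remaining ingredient is correctness: prove by induction on $i$ that every pair $u,v$ with $d_G(u,v)\le\delta_i$ admits a path in $G\cup H$ with at most $\beta$ hops and length at most $(1+\epsilon)\,d_G(u,v)$. The inductive step partitions a shortest $u$--$v$ path into $O(1)$ pieces of length $\approx\delta_{i-1}$; a piece whose endpoints lie in a common supercluster is shortcut by the two hopset edges to the supercluster center together with a low-hop detour furnished by the inductive hypothesis applied at the center's level, and a piece handled by interconnection is shortcut by a single hopset edge; the multiplicative errors of the $O(1)$ pieces, and across the $O(\log n)$ scales, compose, so it suffices to run the whole construction with $\epsilon$ replaced by $\epsilon/\polylog(n)$. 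I expect the main obstacle to be precisely this correctness analysis: choosing the sampling probabilities and the superclustering radii so that simultaneously (a) the cluster count decays geometrically and the hierarchy terminates after $O(\kappa)$ levels, (b) every uninterconnected center has only $\tilde O(n^{1/\kappa})$ nearby centers, and (c) the per-level slack, raised to the $O(\kappa)$ power and summed over $O(\log n)$ scales, remains $1+\epsilon$ --- all without letting the hop bound exceed $(\kappa/\epsilon)^{O(\kappa)}$. By contrast, the PRAM implementation (parallel sampling, truncated Bellman--Ford, and the Chernoff bound on bunch sizes) is comparatively routine.
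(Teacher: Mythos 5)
This statement is imported from \cite{EN19hop} as a black-box citation; the present paper contains no proof of it, so there is nothing here to compare your argument against. For what it is worth, your reconstruction does follow the superclustering-and-interconnection framework underlying \cite{EN19hop}, and the central observation you make --- that every primitive in that scheme is a truncated (bounded-hop) multi-source Bellman--Ford exploration whose aggregate work is controlled by the $\tilde O(n^{1/\kappa})$ bunch-size bound, and which therefore parallelizes directly in PRAM --- is indeed the reason the construction ports to that model; the Klein--Sairam scale reduction to eliminate aspect-ratio dependence is also consistent with the preliminaries of this paper. However, several load-bearing details in your sketch are asserted rather than derived and cannot be checked against the present source: the exact number of phases and the sampling rates (which is where the tradeoff between hopbound $\beta$, hopset size, and exploration depth actually lives), and the precise origin of the $\log^* n$ factor in the edge bound, which you attribute only loosely to ``refinement of the phase structure.'' A genuine verification of your proposal therefore has to be carried out against \cite{EN19hop} itself, not against this paper.
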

	
	\paragraph{Matrix multiplication in PRAM.}
	
	Essentially all the known fast matrix multiplication algorithms are based on Strassen's approach of divide and conquer, and thus are amenable to parallelization \cite{HP98}. In particular, these algorithms which classically require time $T(n)$, can be executed in the PRAM (EREW) model within $O(\log^2n)$ rounds and $\tilde{O}(T(n))$ work.
	
	The algorithm of \theoremref{lem:star} boils down to $O(\log M)=O(\log n)$ standard matrix multiplications (albeit the matrices have large entries, with $O(R\log n)$ bits). Thus, we can compute an ($1+\frac{1}{R}$)-approximate distance products of an $n^r\times n$ matrix by an $n\times n$ matrix in $O(R\cdot\poly(\log n))$ rounds and $\tilde{O}(R\cdot n^{w(r)})$ work. 
	
	The path-reporting mechanism can be adapted to PRAM, by running the \cite{Z02} algorithm sequentially. Since we have only $\beta$ iterations, the parallel time will be only $O(\beta)$ (which is a constant independent of $n$). Once we got the path in $G\cup H$, we can expand all the hopset edges in parallel. We thus have the following result.
	
	\begin{theorem}\label{thm:pram-sources}
		Let $G=(V,E)$ be a weighted undirected graph, fix $S\subseteq V$ of size $n^r$ for some $0\le r\le 1$, and let $0<\epsilon<1$. Then for any $\kappa\ge 1$, there is a randomized parallel algorithm that computes a ($1+\epsilon$)-approximation to all distances in $S\times V$, that runs in $\left(\frac{\log n}{\epsilon}\right)^{O(\kappa)}$ parallel time, using work
		\[
		\tilde{O}(\min\{n^{w(r)}\cdot(\kappa/\epsilon)^{O(\kappa)},|E|\cdot n^{1/\kappa}\})~.
		\]
		Furthermore, for each pair in $S\times V$, a path achieving the approximate distance can be reported within parallel time $(\kappa/\epsilon)^{O(\kappa)}$, and work proportional to the number of edges in it.
	\end{theorem}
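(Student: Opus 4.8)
The plan is to run the centralized algorithm $\texttt{ASP}(G,S,\epsilon)$ of \sectionref{sec:MSP} essentially verbatim, replacing each of its three ingredients — hopset construction, scaled distance products, and path recovery — by its PRAM counterpart, and then to bookkeep parallel time and work. The correctness argument is untouched: by \claimref{claim:star} the matrix $B^{(\beta)}$ returned is a $(1+1/R)^{\beta-1}\le e^{\beta/R}=e^{\epsilon}$-approximation to $d^{(\beta)}_{G\cup H}$ on $S\times V$ (using $R=\beta/\epsilon$), and since $H$ is an $(\epsilon,\beta)$-hopset this is a $(1+O(\epsilon))$-approximation to $d_G$; rescaling $\epsilon$ by a constant at the outset finishes.

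First I would invoke \theoremref{thm:hopset-pram} with the given $\kappa,\epsilon$ to build an $(\epsilon,\beta)$-hopset $H$ with $\beta=(\kappa/\epsilon)^{O(\kappa)}$ in parallel time $(\log n/\epsilon)^{O(\kappa)}$ and work $\tilde{O}(|E|\cdot n^{1/\kappa})$ (using the deterministic analogue of \cite{EM20} makes this step, and hence the whole algorithm, deterministic with the same asymptotics), then form the adjacency matrix $A$ of $G\cup H$ and set $B^{(1)}=A_{S*}$. Second, for $t=1,\dots,\beta-1$ I compute $B^{(t+1)}$ as a $(1+1/R)$-approximation to $B^{(t)}\star A$: as recalled in the paragraph on matrix multiplication in PRAM, the algorithm behind \theoremref{lem:star} reduces to $O(\log M)=O(\log n)$ ordinary products of an $n^{r}\times n$ matrix by an $n\times n$ matrix whose entries have $O(R\log n)$ bits, and since every fast rectangular matrix-multiplication algorithm is built recursively from bilinear (Strassen-style) identities it parallelizes \cite{HP98} to $O(\log^2 n)$ rounds and $\tilde{O}(n^{w(r)})$ work per product. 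Hence one approximate distance product costs $O(R\cdot\poly(\log n))$ time and $\tilde{O}(R\cdot n^{w(r)})$ work, the $\beta-1$ iterations together cost $(\kappa/\epsilon)^{O(\kappa)}\cdot\poly(\log n)=(\log n/\epsilon)^{O(\kappa)}$ parallel time and $\tilde{O}(\beta R\cdot n^{w(r)})=\tilde{O}(n^{w(r)}\cdot(\kappa/\epsilon)^{O(\kappa)})$ work, and combining with the hopset cost and taking the better of the two bounds exactly as in \theoremref{thm:imp-sources} yields the stated work bound.

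Third, for path reporting I would compute a witness matrix alongside each of the $\beta$ distance products — witnesses are obtained by a further distance product (see the paragraph on witnesses, also valid for the $c$-approximate products of \cite[Section~8]{Z02}) and hence stay within the same PRAM bounds — then run the \cite[Section~5]{Z02} recovery procedure. Because there are only $\beta$ levels $B^{(1)},\dots,B^{(\beta)}$, a path in $G\cup H$ realizing a given approximate distance can be assembled level by level in $O(\beta)=(\kappa/\epsilon)^{O(\kappa)}$ parallel time and work linear in its hop-length, after which every hopset edge on it is replaced, in parallel, by its corresponding equal-length path in $G$ via the path-reporting property of the \cite{EN19hop} hopset.

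The only genuinely non-routine points are (i) confirming that fast \emph{rectangular} — not merely square — matrix multiplication parallelizes with a $\poly(\log n)$ blow-up in time and $\tilde{O}(1)$ in work; this is inherited from the recursive bilinear structure as in \cite{HP98}, and is the same observation exploited in \theoremref{thm:CCRMM}; and (ii) checking that carrying the $O(R\log n)$-bit entries produced by the scaling step of \theoremref{lem:star} through a parallel FMM costs no more than the claimed $\tilde{O}$-factors, which holds once $R=(\kappa/\epsilon)^{O(\kappa)}$ is absorbed into the $(\kappa/\epsilon)^{O(\kappa)}$ work factor, since then each ring operation acts on numbers of $\poly(\log n)$ bits. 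Everything else is a direct transcription of \sectionref{sec:MSP}.
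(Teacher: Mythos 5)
Your proposal is correct and follows essentially the same route as the paper's own proof: invoke \theoremref{thm:hopset-pram} for the hopset, parallelize the scaled distance products of \theoremref{lem:star} via the recursive bilinear structure of fast (rectangular) matrix multiplication, iterate $\beta-1$ times as in \algref{alg:MSP}, and adapt the \cite{Z02} path-recovery with witnesses and hopset-edge expansion. The only nitpick is your remark in point (ii) that the entries become $\poly(\log n)$ bits after absorbing $R$ — they are $O(R\log n)$ bits, and $R=(\kappa/\epsilon)^{O(\kappa)}$ need not be $\poly(\log n)$ for arbitrary $\kappa$; but since the resulting $\poly(R)$ overhead is still $(\kappa/\epsilon)^{O(\kappa)}$ and thus swallowed by the stated time and work bounds, this does not affect the conclusion.
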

	
	Note that we can set $\kappa$ to be an arbitrarily large constant, and obtain a polylogarithmic time and work
	$\tO(n^{\omega(r)} + |E|n^{1/\kappa})$.
	
	

	\section{Approximate Distances to $k$-Nearest Neighbors in PRAM}\label{sec:knn}
	
	
	In this section, given a weighted {\em directed} graph $G=(V,E)$, we focus on the task of approximately computing the distances from each $v\in V$ to its $k$ nearest neighbors. The main observation is that we work with rather sparse matrices, since for each vertex we do not need to store distances to vertices that are not among its $k$ nearest neighbors.
	
	In \cite{YZ05} fast algorithms for sparse matrix multiplication were presented. Recall that $\alpha\in[0,1]$ is the maximal exponent so that the product of an $n\times n^\alpha$ by $n^\alpha\times n$ matrices can be computed in $n^{2+o(1)}$ time. Currently by \cite{GU18}, $\alpha\ge 0.313$. Let $\gamma = \frac{\omega-2}{1-\alpha}$.
	
	\begin{theorem}[\cite{YZ05}]\label{thm:YZ}
		The product of two $n\times n$ matrices each with at most $m$ nonzeros can be computed in time
		\[
		\min\{O(n^\omega),m^{\frac{2\gamma}{\gamma+1}}\cdot n^{\frac{2-\alpha\gamma}{\gamma+1}+o(1)}+n^{2+o(1)}\}~.
		\]
	\end{theorem}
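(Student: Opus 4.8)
The plan is to reduce the sparse product to one \emph{dense} rectangular fast matrix multiplication plus a naive sweep, balanced by a threshold on column densities. Write the inputs as $A,B$, each with at most $m$ nonzero entries; we may assume $n\le m\le n^2$, since for $m<n$ the naive outer-product algorithm already runs in $O(n^2)$ time. For $1\le k\le n$ let $a_k$ be the number of nonzeros in the $k$-th column of $A$ and $b_k$ the number of nonzeros in the $k$-th row of $B$, so $\sum_k a_k\le m$ and $\sum_k b_k\le m$. Fix a parameter $\beta\in[0,1]$, set the threshold $\Delta=m/n^\beta$, and call column $k$ of $A$ \emph{heavy} if $a_k>\Delta$; there are at most $m/\Delta=n^\beta$ heavy columns, with index set $I$. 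Decompose $AB=A_{*I}\cdot B_{I*}+A_{*\bar I}\cdot B_{\bar I*}$, where $A_{*I}$ is the submatrix of $A$ on the heavy columns.

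For the heavy part, view $A_{*I}$ as a dense $n\times n^\beta$ matrix and $B_{I*}$ as a dense $n^\beta\times n$ matrix (padding $|I|$ up to $n^\beta$ with zero columns/rows), and multiply them with the best rectangular fast matrix multiplication in time $n^{\omega(1,\beta,1)+o(1)}$. Using that the rectangular exponent is symmetric in its three arguments, so $\omega(1,\beta,1)=\omega(\beta)$, and that $\beta\mapsto\omega(\beta)$ is convex on $[0,1]$, the two anchor points $\omega(\alpha)\le 2+o(1)$ (the definition of $\alpha$) and $\omega(1)=\omega$ give $\omega(\beta)\le\max\{2,\,2+\gamma(\beta-\alpha)\}+o(1)$ with $\gamma=(\omega-2)/(1-\alpha)$. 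For the light part, each $k\notin I$ has $a_k\le\Delta$, so forming and accumulating the outer product $A_{*k}\cdot B_{k*}$ costs $O(a_k b_k)$, and $\sum_{k\notin I}a_kb_k\le\Delta\sum_k b_k\le\Delta m=m^2/n^\beta$; zero-initialising $C$ and adding the two $n\times n$ matrices costs a further $O(n^2)$. Hence, for every $\beta\in[0,1]$, the total running time is $\tilde{O}\bigl(n^{\omega(\beta)}+m^2/n^\beta+n^2\bigr)\cdot n^{o(1)}$.

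It remains to choose $\beta$ to equate the two main terms, i.e. to solve $\omega(\beta)=\log_n(m^2/n^\beta)=2\log_n m-\beta$. Write $m=n^\mu$. If $\mu\le 1+\alpha/2$, take $\beta=2\mu-2\in[0,\alpha]$, so $\omega(\beta)\le 2+o(1)$ and the running time is $n^{2+o(1)}$ (consistent with the stated formula, which is at most $n^2$ in this regime). If $\mu>1+\alpha/2$, use the linear bound $\omega(\beta)=2+\gamma(\beta-\alpha)$ and solve $2+\gamma(\beta-\alpha)=2\mu-\beta$, obtaining $\beta=(2\mu-2+\gamma\alpha)/(\gamma+1)$; this lies in $(\alpha,1)$ exactly when $1+\alpha/2<\mu<(\omega+1)/2$, and substituting back gives the exponent $2\mu-\beta=\tfrac{2\gamma}{\gamma+1}\mu+\tfrac{2-\alpha\gamma}{\gamma+1}$, i.e. running time $m^{\frac{2\gamma}{\gamma+1}}\cdot n^{\frac{2-\alpha\gamma}{\gamma+1}+o(1)}$, as claimed. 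Finally, for $\mu\ge(\omega+1)/2$ (equivalently, when that $\beta$ would exceed $1$) we ignore sparsity and run dense fast matrix multiplication in $O(n^\omega)$ time; reporting the minimum of the sparse bound and $O(n^\omega)$ over all regimes yields the stated expression, and at $\mu=(\omega+1)/2$ the sparse formula already equals $n^\omega$, so the transition is smooth.

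The calculations are routine; the one step that needs care is the interpolation for the heavy part — arguing that the hypotheses ($n^{2+o(1)}$ time for $n\times n^\alpha$ by $n^\alpha\times n$, and $n^\omega$ for square multiplication) genuinely yield a usable fast algorithm at the intermediate aspect ratio $n^\beta$. This rests on the standard convexity and full symmetry of the rectangular matrix-multiplication exponent (the tensor-power/padding argument), after which the only remaining bookkeeping is to confirm that the optimal $\beta$ stays in the admissible range $[\alpha,1]$ throughout the middle regime $n^{1+\alpha/2}\le m\le n^{(\omega+1)/2}$.
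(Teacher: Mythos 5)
This theorem is quoted from Yuster and Zwick \cite{YZ05}; the paper does not reprove it, so there is no in-paper proof to compare against. Your write-up is a correct reconstruction of the \cite{YZ05} argument: the heavy/light column split with threshold $\Delta=m/n^\beta$ bounds the heavy set by $n^\beta$ and the light work by $\sum_{k\notin I}a_kb_k\le\Delta m=m^2/n^\beta$, the symmetry $\omega(1,\beta,1)=\omega(\beta)$ together with convexity gives $\omega(\beta)\le 2+\gamma(\beta-\alpha)+o(1)$ for $\beta\in[\alpha,1]$, and solving $2+\gamma(\beta-\alpha)=2\mu-\beta$ for $m=n^{\mu}$ in the regime $1+\alpha/2\le\mu\le(\omega+1)/2$ yields exactly the exponent $\tfrac{2\gamma}{\gamma+1}\mu+\tfrac{2-\alpha\gamma}{\gamma+1}$, with the $n^{2+o(1)}$ and $O(n^{\omega})$ terms covering the two extreme regimes.
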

	
	We present the following adaptation to distance products in the PRAM model.
	In our setting, a matrix will be sparse if it contains few non-infinity values.
	\begin{lemma}\label{lem:pram-mult}
		For $R\ge 1$, the $(1+\frac{1}{R})$-approximate distance product of two $n\times n$ matrices each with at most $m$ non-infinities can be computed in parallel time $O(R\log^{O(1)}n)$ and work
		\begin{equation}\label{eq:work1}
			\tilde{O}(R\cdot  \min\{n^\omega,m^{0.702}\cdot n^{1.18}+n^{2+o(1)}\})~.
		\end{equation}
	\end{lemma}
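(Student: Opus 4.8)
The plan is to mimic the reduction underlying \theoremref{lem:star}, but to feed the resulting algebraic products into the sparse matrix multiplication algorithm of Yuster and Zwick (\theoremref{thm:YZ}) instead of a dense multiplication, and then to check that every step parallelizes to polylogarithmic time with the claimed work.

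First I would reduce the $(1+\frac1R)$-approximate distance product to $O(\log M)=O(\log n)$ \emph{exact} distance products of matrices whose finite entries lie in $\{1,\dots,R\}$, exactly via the scaling step in the proof of \theoremref{lem:star}: for $k=0,1,\dots,\log M-\log R$ define $A',B'$ by $A'_{ij}=\lceil A_{ij}/2^k\rceil$ if $A_{ij}\le R\cdot 2^k$ and $\infty$ otherwise, compute $C'^{(k)}=A'\star B'$, and take the entry-wise scaled-up minimum. Since $M=\poly(n)$ there are only $O(\log n)$ values of $k$, and crucially the $\infty$-pattern of each $A'$ is contained in that of $A$, so $A'$ (resp.\ $B'$) still has at most $m$ non-infinities. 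Each bounded product is turned into an ordinary algebraic product by the Alon--Galil--Margalit encoding $\hat A_{ij}=(n+1)^{R-A'_{ij}}$ (and $\hat A_{ij}=0$ when $A'_{ij}=\infty$), $\hat B$ analogously; then $\hat A,\hat B$ are integer matrices with at most $m$ nonzeros each and with entries of $O(R\log n)$ bits, and $C'$ is read off from $\hat C=\hat A\cdot\hat B$ via $C'_{ij}=2R-\lfloor\log_{n+1}\hat C_{ij}\rfloor$.

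Next I would compute each $\hat C=\hat A\cdot\hat B$ using \theoremref{thm:YZ}, whose arithmetic-operation count is $\min\{O(n^\omega),\,m^{2\gamma/(\gamma+1)}\cdot n^{(2-\alpha\gamma)/(\gamma+1)+o(1)}+n^{2+o(1)}\}$; plugging in the current estimates $\omega<2.373$ and $\alpha\ge0.313$ (so $\gamma=\frac{\omega-2}{1-\alpha}\approx0.54$) turns this into the bound displayed in \eqref{eq:work1}, up to the factor $R$ discussed now. Because the entries of $\hat A,\hat B$ have $O(R\log n)$ bits, each arithmetic operation costs $\tilde O(R)$ work and polylogarithmic parallel time (standard parallel integer arithmetic), which accounts for the leading $R$ and keeps the overall time polylogarithmic. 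For the parallel time of the multiplication itself I would invoke the fact recorded earlier in the excerpt (following \cite{HP98}) that Strassen-type fast dense matrix multiplication runs in PRAM EREW in $O(\log^2 n)$ rounds with $\tilde O(T(n))$ work; the Yuster--Zwick algorithm is a fixed combination of such dense multiplications on sub-blocks together with counting/sorting to identify the heavy rows and columns, and all of these admit work-efficient polylogarithmic-time PRAM implementations. Finally, the $O(\log n)$ scaled copies are processed in parallel and their entry-wise minimum costs $O(\log n)$ time and $\tilde O(n^2)$ work, which is absorbed into \eqref{eq:work1}; the $O(n^\omega)$ alternative is just a direct parallelization of \theoremref{lem:star}.

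The step I expect to be the main obstacle is the last one: checking that the combinatorial part of the Yuster--Zwick algorithm — partitioning the index set according to the number of nonzeros per row/column and recursing on the dense sub-blocks — can be carried out in polylogarithmic parallel time without inflating the work beyond $\tilde O$ of the stated bound, and that the $O(R\log n)$-bit arithmetic contributes exactly the advertised factor $R$ in work and only polylog in time. Everything else is a routine transcription of \theoremref{lem:star} and \theoremref{thm:YZ}.
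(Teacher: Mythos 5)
Your proposal follows essentially the same route as the paper's proof: reduce the approximate distance product to $O(\log n)$ exact bounded-entry products via the scaling step of Theorem~\ref{lem:star}, observe that the scaled matrices inherit the $O(m)$-sparsity of the originals, encode each bounded product algebraically and hand it to the Yuster--Zwick sparse multiplication of Theorem~\ref{thm:YZ}, plug in $\gamma=(\omega-2)/(1-\alpha)\approx 0.542$ to get the exponents $0.702$ and $1.18$, and carry the $\tilde O(R)$ overhead from the $O(R\log n)$-bit arithmetic. The paper's proof is terser and simply asserts these steps, also citing \cite{SV81} for the parallel entry-wise minimum; the concern you flag at the end about parallelizing the combinatorial bookkeeping in Yuster--Zwick is real but is not addressed in the paper either (it silently assumes, as you do, that the row/column counting and partitioning into dense blocks parallelizes at no extra work cost, which is plausible but left to the reader in both accounts).
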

	
	\begin{proof}
		The $(1+\frac{1}{R})$-approximate distance product of \theoremref{lem:star} involves $O(\log n)$ standard matrix multiplications. These multiplications can be done in parallel, and we need to compute entry-wise minimum of these matrices. This can also be done very efficiently in PRAM (See e.g., \cite{SV81}). By the reduction described in the proof of \theoremref{lem:star}, the resulting matrices will have $O(m)$ nonzeros (and entries of size $O(n^R)$), so the parallel time to compute each such multiplication is $O(R\log^{O(1)}n)$. Using the currently known bounds on $\omega$ and $\alpha$, we have $\gamma\approx 0.542$. Plugging this in \theoremref{thm:YZ}, the work required is as in \eqref{eq:work1}.
		
	\end{proof}

	For an $n\times n$ matrix $A$, denote by $\tr_k(A)$ the matrix $A$ in which every column is truncated to contain only the smallest $k$ entries, and $\infty$ everywhere else. Clearly this operation can be executed in $\poly(\log n)$ parallel time and $\tilde{O}(n^2)$ work. For a vertex $i\in V$, let $N_k(i)$ be the set of $k$ nearest neighbors of $i$.
	\begin{claim}\label{claim:k-nn}
		Let $G$ be a weighted directed graph. For some $t \ge 1$, and $c,c' \ge 1$, let $A$ be an $n\times n$ matrix such that for every $1\le i\le n$ and every $j\in N_k(i)$, $A_{ij}$ is a $c$-approximation to $d^{(t)}_G(i,j)$, and $\infty$ for $j\notin N_k(i)$. Then, if $B$ is a $c'$-approximation to $A^T\star A$, then for each $i$ and $j\in N_k(i)$,   we have that $B_{ij}$ is a ($c\cdot c'$)-approximation to $d^{(2t)}_G(i,j)$.
	\end{claim}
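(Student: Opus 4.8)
The plan is to imitate the proof of \claimref{claim:star}, with its single‑extra‑edge step replaced by a hop‑\emph{doubling} step carried by the distance product $A^{T}\star A$. As there, it is enough to show that the exact product $C:=A^{T}\star A$ already satisfies $d^{(2t)}_{G}(i,j)\le C_{ij}\le c\cdot d^{(2t)}_{G}(i,j)$ for every $i$ and every $j\in N_k(i)$: once this is known, the hypothesis $C_{ij}\le B_{ij}\le c'\cdot C_{ij}$ at once gives $d^{(2t)}_{G}(i,j)\le B_{ij}\le c\cdot c'\cdot d^{(2t)}_{G}(i,j)$, which is the assertion of the claim. So from here on I work with $C$, and the remaining $c\cdot c'$ bookkeeping is identical to the last line of \claimref{claim:star}.

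For the lower bound, write $C_{ij}=\min_{\ell}\{A_{\ell i}+A_{\ell j}\}$. Whenever both summands are finite, each of them lower bounds the length of a walk of at most $t$ edges, and these two walks compose into a single walk from $i$ to $j$ using at most $2t$ edges; hence $A_{\ell i}+A_{\ell j}\ge d^{(2t)}_{G}(i,j)$, and taking the minimum over $\ell$ gives $C_{ij}\ge d^{(2t)}_{G}(i,j)$. This is the step for which the transpose is present: it orients the two legs head‑to‑tail so that they concatenate into a single $i\rightsquigarrow j$ walk rather than two walks leaving a common vertex.

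For the upper bound I would exhibit one good intermediate index. Let $P$ be a shortest $i\to j$ path with at most $2t$ edges, and let $m$ be a vertex of $P$ for which both the prefix $i\rightsquigarrow m$ and the suffix $m\rightsquigarrow j$ use at most $t$ edges; then the two legs of $P$ have $t$‑bounded lengths summing to $d^{(2t)}_{G}(i,j)$. Provided $m$ contributes a finite value to the minimum defining $C_{ij}$, the entries it contributes are $c$‑approximations of those two leg lengths, so $C_{ij}\le A_{m i}+A_{m j}\le c\cdot d^{(2t)}_{G}(i,j)$. The point is that those two entries are indeed finite — i.e. after truncation $m$'s stored list still records both of its legs' far endpoints — and this I would derive from a monotonicity property of $k$‑nearest‑neighbour sets along shortest paths: any vertex lying on a shortest path out of $u$ that is no farther from $u$ than $u$'s current $k$‑th nearest neighbour is itself in $N_k(u)$, with ties broken consistently with $\tr_k$. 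Since $j\in N_k(i)$ and $m$ lies on $P$, the vertex $m$ together with the whole suffix of $P$ stays inside the relevant near‑neighbour set, which is exactly what keeps the needed entries alive.

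I expect the real difficulty to be precisely this non‑vanishing bookkeeping in the directed, truncated regime: one must check that the truncation $\tr_k$ never deletes an entry that the doubling step relies on, and that the bound $k$ is still honoured after the product — which may force the intermediate vertex to be chosen not as a literal midpoint of $P$ but as the last vertex of $P$ still lying in the truncated list, with a $\tr_k$ applied to the output. Everything else is routine: subadditivity of bounded‑hop distances for the lower bound, splitting an optimal $(\le 2t)$‑hop path for the upper bound, and composing the $c$ and $c'$ factors exactly as in \claimref{claim:star}.
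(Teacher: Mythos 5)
Your plan — split a $(\le 2t)$-hop shortest $i\!\rightsquigarrow\! j$ path at a midpoint $m$ and charge each leg to one factor of the distance product — is the paper's plan, and your observation that $m\in N_k(i)$ follows because $m$ precedes $j\in N_k(i)$ on a shortest path out of $i$ is one of the two facts the paper uses. But the argument has a genuine gap: it establishes only \emph{one} of the two memberships needed to make $m$ contribute a finite term. The entry for the suffix leg is finite only if $j\in N_k(m)$, and this is a statement about $N_k(m)$ — a different near-neighbor set — not about $N_k(i)$. Your phrase "$m$ together with the whole suffix of $P$ stays inside the relevant near-neighbour set" keeps everything inside $N_k(i)$, which controls the wrong row/column of $A$; it does not make $A_{mj}$ (equivalently $A_{hj}$ in the paper's notation) finite. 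The paper closes this gap with a short triangle-inequality argument: if $j\notin N_k(h)$ then at least $k$ vertices $v$ satisfy $d_G(h,v)\le d_G(h,j)$, and prepending the $i\rightsquigarrow h$ prefix to each of their $h\rightsquigarrow v$ paths yields $k$ vertices at distance at most $d_G(i,j)$ from $i$, contradicting $j\in N_k(i)$. Without this step the suffix entry may be $\infty$ and the upper bound collapses; this is precisely the "non-vanishing bookkeeping" you flagged as the likely difficulty, and it is where the proof actually needs the extra idea.

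A secondary point: your justification of the transpose is not right. You write $C_{ij}=\min_\ell\{A_{\ell i}+A_{\ell j}\}$ and say the transpose "orients the two legs head-to-tail," but in a directed graph both $A_{\ell i}$ and $A_{\ell j}$ approximate distances \emph{from} $\ell$, so the two legs leave a common vertex and do not concatenate into an $i\rightsquigarrow j$ walk. The paper's own proof bounds the product by $A_{ih}+A_{hj}$, which is the index pattern of $A\star A$, not $A^{T}\star A$; the transpose in the statement is tied to the column-truncation convention of $\tr_k$ in \algref{alg:approx-k-nn} and is immaterial in the undirected case, but the rationale you give for it does not hold. This does not, however, substitute for the missing $j\in N_k(h)$ step, which is the substantive omission.
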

	\begin{proof}
		Let $h$ be the middle vertex on the shortest path with at most $2t$ edges between $i$ and $j$ (so that there are at most $t$ edges on the sub-paths from $i$ to $h$ and from $h$ to $j$). Since $j\in N_k(i)$, the triangle inequality implies that $h\in N_k(i)$ and $j\in N_k(h)$. Thus, $A_{ih}$ (resp. $A_{hj}$) is a $c$-approximation to $d^{(t)}_G(i,h)$ (resp. $d^{(t)}_G(h,j)$).
		By definition of distance product, $(A^T\star A)_{ij}\le c\cdot d^{(t)}_G(i,h)+c\cdot d^{(t)}_G(h,j)\le c\cdot d^{(2t)}_G(i,j)$. So $B_{ij}$ is a $c\cdot c'$-approximation to $d^{(2t)}_G(i,j)$.
		(Note also that $A^T \star A)_{ij} \ge d_G^{(t)}(i,h) + d_G^{(t)}(h,j) = d_G(i,j)$.)
	\end{proof}
	
	Our algorithm to compute approximate shortest paths to $k$ nearest neighbors is done by simply computing $\log k$ times an approximate distance product, truncating each time to the smallest $k$ entries in each column. See \algref{alg:approx-k-nn}.
	(This algorithm is based on an analogous algorithm from \cite{CDKL19}, devised there in the context of the Congested Clique model.)
	
	\begin{algorithm}[ht]
		\caption{$\texttt{Approx $k$-NN}(G,\epsilon)$}\label{alg:approx-k-nn}
		\begin{algorithmic}[1]
			\STATE Let $A$ be the adjacency matrix of $G$;
			\STATE Let $R=\lceil (\log k)/\epsilon\rceil$;
			\FOR {$i$ from $1$ to $\lceil\log k\rceil$}
			\STATE Let $A$ be a ($1+1/R$)-approximation to $(\tr_k(A))^T\star\tr_k(A)$;
			\ENDFOR
			\RETURN $\tr_k(A)$;
		\end{algorithmic}
	\end{algorithm}
	
	Since each matrix has $m=O(nk)$ non-infinities, and there are only $O(\log k)$ iterations, the parallel time is $R\cdot\log^{O(1)}n$ and the total work, using the bound of \eqref{eq:work1} with $m=O(nk)$, is
	\[
	\tilde{O}(R\cdot  \min\{n^\omega,k^{0.702}\cdot n^{1.882}+n^{2+o(1)}\})~.
	\]

	The correctness of the algorithm follows from \claimref{claim:k-nn}, as the shortest path from a vertex $v$ to a  neighbor $u \in N_k(v)$ can have at most $k$ edges. The approximation we obtain is $(1+\frac{1}{R})^{\lceil\log k\rceil}=1+O(\epsilon)$. We remark that the truncation might actually remove the distance from $v\in V$ to some $u \in N_k(v)$, because the computed distances are only approximations, so a farther away neighbor can "replace" $u$. Denote by $N_k'(v)$ the $k$ vertices returned by \algref{alg:approx-k-nn} for $v\in V$, and note that we still obtain approximate distances every vertex in $N_k(v)$.
	
	Our algorithm can also recover the paths with approximate distances for every $i\in V$ and $j\in N_k'(i)$. This is done by applying the algorithm from \cite[Section 5]{Z02}, while executing the recursive calls in parallel.\footnote{Here is a brief sketch: Recall that we compute the witnesses for all the $O(\log k)$ distance products. Given a pair $i\in V$ and $j\in N_k'(i)$, if $W$ is the witness matrix in the last iteration of the algorithm, then there are two cases: Either $W_{ij}$ contains the middle vertex $h$ (with at most $k/2$ hops to both $i,j$) on the approximate $i-j$ path.  Then we can simply recurse in parallel on the pairs $i,h$ and $h,j$, and then concatenate the paths. Otherwise, when $W_{ij}=0$, we just return the edge $(i,j)$.}

	\begin{theorem}\label{thm:pram-approx-k-nn}
		Let $G=(V,E)$ be a weighted directed $n$-vertex  graph, and let $1 \le k \le n$ and $0<\epsilon<1$ be some parameters. Then there is a deterministic parallel algorithm that computes a ($1+\epsilon$)-approximation to all distances between any $u\in V$ and its $k$ nearest neighbors, that runs in parallel time $O((\log^{O(1)}n)/\epsilon)$, using work
		\[
		\tilde{O}( \min\{n^\omega,k^{0.702}\cdot n^{1.882}+n^{2+o(1)}\}/\epsilon)~.
		\]
		Furthermore, for each $i\in V$ and $j\in N'_k(i)$, a path achieving the approximate distance can be reported in $O(\log k)$ parallel time and work proportional to the number of edges in it.
	\end{theorem}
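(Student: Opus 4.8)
The plan is to assemble \theoremref{thm:pram-approx-k-nn} from the three ingredients already laid out in this section: \algref{alg:approx-k-nn}, the sparse approximate distance product of \lemmaref{lem:pram-mult}, and the repeated-squaring correctness guarantee of \claimref{claim:k-nn}. First I would establish \emph{correctness}. Observe that for any $i\in V$ and any $j\in N_k(i)$, a shortest $i$--$j$ path can be taken to use at most $k$ edges (each prefix of such a path ends at a vertex at least as close to $i$ as $j$, hence in $N_k(i)$, so the path visits at most $k$ distinct vertices). Initially $A$ is the adjacency matrix, so $\tr_k(A)$ satisfies the hypothesis of \claimref{claim:k-nn} with $t=1$, $c=1$. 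Applying the claim inductively: after iteration $i$, for every $v$ and every $u\in N_k(v)$ the entry $A_{vu}$ is a $(1+1/R)^{2^i-1}$-approximation (roughly; I would track the constants carefully) to $d_G^{(2^i)}(v,u)$. After $\lceil\log k\rceil$ iterations the hop bound exceeds $k$, so we capture the true distance $d_G(v,u)$ for every $u\in N_k(v)$, with multiplicative error $(1+1/R)^{O(\log k)}$. Choosing $R=\lceil(\log k)/\epsilon\rceil$ makes this $(1+1/R)^{O(\log k)} = e^{O(\epsilon)} = 1+O(\epsilon)$, and rescaling $\epsilon$ by a constant gives the claimed $(1+\epsilon)$ bound. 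The subtlety worth a sentence is that $\tr_k$ may evict a genuine $k$-nearest neighbor $u$ in favor of a slightly-more-distant vertex whose \emph{estimated} distance is smaller; but the estimates are within $(1+\epsilon)$ of true distances, so the surviving set $N_k'(v)$ still contains, for each true neighbor, a good estimate, and this is exactly what the theorem statement promises.

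Next I would handle the \emph{complexity}. Every iteration computes one $(1+1/R)$-approximate distance product of two $n\times n$ matrices, each of which has at most $m=O(nk)$ non-infinities (each column of $\tr_k$ has at most $k$ finite entries). By \lemmaref{lem:pram-mult} this takes parallel time $O(R\log^{O(1)}n)$ and work $\tilde O(R\cdot\min\{n^\omega,\,m^{0.702}n^{1.18}+n^{2+o(1)}\})$; substituting $m=O(nk)$ gives work $\tilde O(R\cdot\min\{n^\omega,\,k^{0.702}n^{1.882}+n^{2+o(1)}\})$. The truncation operator $\tr_k$ and the entrywise minimum over the $O(\log n)$ scaled products cost only $\poly(\log n)$ time and $\tilde O(n^2)$ work and are absorbed. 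There are $\lceil\log k\rceil=O(\log n)$ iterations, which are sequential (each depends on the previous matrix), so time and work each pick up only a $\log n$ factor, absorbed into $\tilde O(\cdot)$ and into the $\log^{O(1)}n$. With $R=\Theta((\log k)/\epsilon)$ the time is $O((\log^{O(1)}n)/\epsilon)$ and the work is $\tilde O(\min\{n^\omega,\,k^{0.702}n^{1.882}+n^{2+o(1)}\}/\epsilon)$, as stated. Determinism follows because FRMM, scaling, truncation, and entrywise minimum are all deterministic — no hopset construction is invoked here, which is why (unlike \theoremref{thm:pram-sources}) the algorithm is deterministic.

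Finally, for \emph{path reporting}, I would invoke the witness machinery from the Preliminaries: alongside each of the $O(\log k)$ approximate distance products we also compute a witness matrix $W$ (at essentially the same cost, per \cite{GM93,Z02}, and this holds for witnesses of approximate products too). To recover an $i$--$j$ path for $j\in N_k'(i)$, use the last iteration's witness: either $W_{ij}$ names a midpoint $h$ (with at most half the hop budget to each of $i,j$, and by the triangle-inequality argument $h\in N_k(i)$, $j\in N_k(h)$ so the recursion stays within stored data), in which case recurse in parallel on $(i,h)$ and $(h,j)$ and concatenate; or $W_{ij}=0$ and we output the single edge $(i,j)$. The recursion has depth $O(\log k)$ and total size proportional to the number of edges in the reported path, giving $O(\log k)$ parallel time and work linear in the path length.

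The main obstacle, and the part deserving the most care, is the bookkeeping around truncation in the correctness proof: I must argue that $\tr_k$ never discards an entry needed for a \emph{future} application of \claimref{claim:k-nn}. The claim's proof already does this — it uses the triangle inequality to show that the midpoint $h$ of an $\le 2t$-hop shortest $i$--$j$ path lies in $N_k(i)$ and that $j\in N_k(h)$, so the two length-$t$ subproblems are supported on entries that $\tr_k$ retains — but one must verify this propagates cleanly through all $\lceil\log k\rceil$ rounds despite the accumulating multiplicative error, i.e. that the sets $N_k(\cdot)$ used in the analysis are the \emph{true} $k$-nearest-neighbor sets (independent of the estimates), so the argument is not circular. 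Everything else is routine substitution into the stated lemmas.
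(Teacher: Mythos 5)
Your plan mirrors the paper's proof: Algorithm~\ref{alg:approx-k-nn}, Lemma~\ref{lem:pram-mult} instantiated with $m=O(nk)$ (since each column of $\tr_k(A)$ has at most $k$ finite entries), Claim~\ref{claim:k-nn} for the hop-doubling correctness, and the witness machinery of~\cite{GM93,Z02} for path reporting. The complexity accounting and the determinism remark also match. One concrete error, though: your per-iteration bookkeeping asserts that after iteration $i$ the entries are a $(1+1/R)^{2^i-1}$-approximation, which assumes the error \emph{squares} each round. That is not what Claim~\ref{claim:k-nn} gives. It gives $c\mapsto c\cdot c'$, not $c\mapsto c^2 c'$: in its proof the two pieces $A_{ih}\le c\,d_G^{(t)}(i,h)$ and $A_{hj}\le c\,d_G^{(t)}(h,j)$ contribute $A_{ih}+A_{hj}\le c\bigl(d_G^{(t)}(i,h)+d_G^{(t)}(h,j)\bigr)=c\,d_G^{(2t)}(i,j)$, so a multiplicative error does not compound across a sum. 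Hence after $i$ iterations the accumulated factor is $(1+1/R)^{i}$, and after $\lceil\log k\rceil$ iterations it is $(1+1/R)^{\lceil\log k\rceil}=1+O(\epsilon)$ with $R=\lceil(\log k)/\epsilon\rceil$. With your formula one would be forced to take $R=\Theta(k/\epsilon)$, blowing up both the parallel time and the work bound beyond what the theorem states, so this is not cosmetic. (You write ``$(1+1/R)^{O(\log k)}$'' in the next sentence, which is right, but it contradicts your stated $(1+1/R)^{2^i-1}$.)

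The concern in your final paragraph is well placed, and the paper is terse here. Claim~\ref{claim:k-nn}'s hypothesis assumes the matrix is supported exactly on the true sets $N_k(\cdot)$, but $\tr_k$ evicts by \emph{estimated} distance, so a true $k$-nearest neighbor can be replaced and the hypothesis is not literally maintained across rounds. The paper addresses this only by remarking on the possible eviction, renaming the returned set $N_k'(v)$, and stating the ``furthermore'' clause for $N_k'(i)$. The implicit resolution, in line with the $k$-NN objective defined in the introduction, is that anything evicting a true neighbor has a smaller estimate and, since estimates only overestimate, a true distance that is no larger up to the current error factor; so the returned pairs are $(1+\epsilon)$-approximate $k$-NN with valid $(1+\epsilon)$-approximate distances. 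A fully rigorous proof would state and carry this weakened invariant, rather than the literal hypothesis of Claim~\ref{claim:k-nn}, through the induction.
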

	Note that for $k\le n^{0.168}$ this work is $n^{2+o(1)}$, and while $k\le n^{0.698}$ the work is smaller than $n^\omega$.

	\subsection{Exact Distances}
	Here we show an efficient parallel algorithm, that given a weighted directed graph, computes the {\em exact} distances from each vertex to its $k$ nearest neighbors.
	To this end, we note that the distance product of $n\times n$ matrices with at most $k$ non-infinities in each row/column can be computed in $\log^{O(1)}n$ parallel time with $O(k n^2)$ processors. Simply assign $\tilde{O}(k)$ processors for each of the $n^2$ entries in the output matrix. They will compute the relevant inner product (over the $(\min,+)$ ring) that has at most $2k$ non-infinities, obtaining a total of $\tilde{O}(k\cdot n^2)$ work.
	The following \algref{alg:exact-k-nn} computes the exact distances.
	
	\begin{algorithm}[ht]
		\caption{$\texttt{Exact $k$-NN}(G)$}\label{alg:exact-k-nn}
		\begin{algorithmic}[1]
			\STATE Let $A$ be the adjacency matrix of $G$;
			\FOR {$i$ from $1$ to $\lceil\log k\rceil$}
			\STATE Let $A=(\tr_k(A))^T\star\tr_k(A)$;
			\ENDFOR
			\RETURN $\tr_k(A)$;
		\end{algorithmic}
	\end{algorithm}
	As there are $O(\log k)$ iterations, we get $\log^{O(1)}n$ parallel time and $\tilde{O}(k\cdot n^2)$ work. The correctness of this algorithm follows from \claimref{claim:k-nn} with $c=c'=1$, and the fact that the shortest path between $k$ nearest neighbors has at most $k$ edges. The following theorem summarizes this result.
	
	\begin{theorem}\label{thm:pram-exact-k-nn}
		Let $G=(V,E)$ be a weighted directed $n$-vertex graph, and let $1 \le k\le n$. Then there is a deterministic parallel algorithm that computes all distances between any $u\in V$ and its $k$ nearest neighbors, that runs in parallel time $\log^{O(1)}n$, using work $\tilde{O}(k\cdot n^2)$.
		
		Furthermore, for each $i\in V$ and $j\in N_k(i)$, a shortest path can be reported in $O(\log k)$ parallel time and work proportional to the number of edges in it.
	\end{theorem}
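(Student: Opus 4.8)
The plan is to establish \theoremref{thm:pram-exact-k-nn} by combining \algref{alg:exact-k-nn} with \claimref{claim:k-nn} (instantiated at $c=c'=1$) and a direct cost analysis of a single truncated distance product. First I would argue the key structural fact that the shortest path in $G$ from any vertex $v$ to a neighbor $u \in N_k(v)$ uses at most $k$ edges: every prefix of such a shortest path ends at a vertex that is itself among the $k$ nearest neighbors of $v$ (by the triangle inequality, as in the proof of \claimref{claim:k-nn}), and since $|N_k(v)| = k$ and the prefixes reach distinct vertices, the path can have at most $k$ edges. Consequently $d_G(v,u) = d_G^{(k)}(v,u)$ for all $u \in N_k(v)$, so it suffices to compute $d_G^{(k)}(v,u)$ for these pairs.

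Next I would run the induction driving \algref{alg:exact-k-nn}. Let $A_0$ be the adjacency matrix; note $\tr_k(A_0)$ retains, for each column, the $k$ smallest incoming edge weights, which is harmless since only the $k$ nearest neighbors are of interest (more carefully: for $j \in N_k(i)$ the relevant entries survive truncation by the prefix argument). Inductively, after iteration $s$ the matrix $A_s$ satisfies: for every $i$ and every $j \in N_k(i)$, $(A_s)_{ij} = d_G^{(2^s)}(i,j)$, and $(A_s)_{ij} = \infty$ for $j \notin N_k(i)$. The base case $s=0$ is immediate. For the inductive step, apply \claimref{claim:k-nn} with $c=c'=1$, $t=2^s$: the claim gives that $(\tr_k(A_s)^T \star \tr_k(A_s))_{ij}$ is a $1$-approximation — i.e.\ exactly equal — to $d_G^{(2^{s+1})}(i,j)$ for $j \in N_k(i)$; the subsequent truncation again preserves these entries. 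After $\lceil \log k \rceil$ iterations we have $2^{\lceil \log k\rceil} \ge k$, and since $\beta$-bounded distances stabilize once $\beta$ exceeds the true hop-count, $(A_{\lceil\log k\rceil})_{ij} = d_G^{(k)}(i,j) = d_G(i,j)$ for all $j \in N_k(i)$. Returning $\tr_k(A)$ then yields exactly the desired distances.

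For the complexity, I would invoke the remark preceding the theorem: a single distance product of two $n \times n$ matrices each with at most $k$ non-infinities per row and per column is computed in $\log^{O(1)} n$ parallel time and $\tilde O(k \cdot n^2)$ work, by allocating $\tilde O(k)$ processors per output entry, each evaluating a $(\min,+)$-inner-product over at most $2k$ finite summands and taking the minimum via a standard balanced-tree reduction (e.g.\ \cite{SV81}); the truncation step $\tr_k(\cdot)$ is a per-column selection of the $k$ smallest values, also doable in $\log^{O(1)} n$ time and $\tilde O(n^2)$ work. Since there are only $\lceil \log k \rceil = O(\log n)$ iterations, the total is $\log^{O(1)} n$ parallel time and $\tilde O(k \cdot n^2)$ work. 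Finally, for path reporting, I would maintain witness matrices for each of the $O(\log k)$ products (witnesses for a distance product cost no more than the product itself, up to logarithmic factors — see the Witnesses paragraph and \cite{GM93,Z02}); given $i$ and $j \in N_k(i)$, recurse in parallel on the witness $h$ splitting the hop-count, concatenating the two returned subpaths, bottoming out at a single edge when the witness indicates a direct edge. The recursion has depth $O(\log k)$, giving $O(\log k)$ parallel time and work proportional to the number of edges output.

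The main obstacle — though it is more a bookkeeping subtlety than a genuine difficulty — is verifying that the repeated truncation $\tr_k$ never discards an entry that is actually needed for a pair $(i,j)$ with $j \in N_k(i)$. This requires carefully chaining the triangle-inequality argument of \claimref{claim:k-nn}: the middle vertex $h$ on the exact $2^s$-bounded $i$–$j$ path lies in $N_k(i)$ and $j \in N_k(h)$, so the entries $(A_s)_{ih}$ and $(A_s)_{hj}$ that \claimref{claim:k-nn} relies upon have indeed survived the previous truncation; one must also check that truncating by columns (incoming) versus the transpose's rows is consistent with how \claimref{claim:k-nn} is stated. Once this invariant is pinned down, the rest is the routine induction and cost accounting sketched above.
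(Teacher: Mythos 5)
Your proposal matches the paper's proof in both structure and substance: it runs \algref{alg:exact-k-nn}, derives correctness from \claimref{claim:k-nn} with $c=c'=1$ together with the fact that shortest paths to $k$-nearest neighbors have at most $k$ hops, and obtains the stated work bound by assigning $\tilde O(k)$ processors to each of the $n^2$ output entries to evaluate a $(\min,+)$-inner-product over $O(k)$ finite summands. You simply spell out the hop-count bound and the truncation-invariant chaining in more detail than the paper (which states these tersely), but the route is the same.
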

	
	We remark that for general dense graphs there is no known algorithm to compute APSP with $n^{3-\epsilon}$ work and $\poly(\log n)$ time. Hence this result is meaningful for essentially all values of $k=o(n)$.

\end{document}